\newcommand{\cov}{\text{cov}}
\newcommand{\var}{\text{var}}
\newcommand{\Proba}{\mathbb{P}}
\newtheorem{theorem}{Theorem}[section]
\newtheorem{definition}[theorem]{Definition}
\newtheorem{corollary}[theorem]{Corollary}
\newtheorem{corrolary}[theorem]{Corollary}
\newtheorem{remark}[theorem]{Remark}
\newtheorem{proposition}[theorem]{Proposition}
\newtheorem{example}[theorem]{Example}
\newtheorem{lemma}[theorem]{Lemma}
\newtheorem{proof}[theorem]{Proof}
\begin{document}

\title{Multivariate integer-valued autoregressive\ models applied to earthquake
counts}

\author{Mathieu Boudreault\footnote{Quantact Research Group, UQAM, Research supported
by the NSERC.}  \\ 
D\'epartement de math\' ematiques \\ Universit\'e du Qu\'ebec \`a Montr\'eal \\201, avenue du Pr\'esident-Kennedy\\
Montr\'eal (Qu\'ebec)\\
Canada H2X 3Y7  \\ 
email: \texttt{boudreault.mathieu@uqam.ca}
\and
Arthur Charpentier\footnote{Quantact Research Group, UQAM, Research supported
by the Research Chair AXA/FdR ({\em Large risk in insurance}).} \\ 
D\'epartement de math\' ematiques \\ Universit\'e du Qu\'ebec \`a Montr\'eal \\201, avenue du Pr\'esident-Kennedy\\
Montr\'eal (Qu\'ebec)\\
Canada H2X 3Y7  \\ 
email: \texttt{charpentier.arthur@uqam.ca} }

\maketitle



\newpage
\begin{center}
\textbf{Abstract}
\end{center}

In various situations in the insurance industry, in finance, in
epidemiology, etc., one needs to represent the joint evolution of the
number of occurrences of an event. In this paper, we present a
multivariate integer-valued autoregressive (MINAR) model, derive its
properties and apply the model to earthquake occurrences across various
pairs of tectonic plates. The model is an extension of \cite{karlis} where cross autocorrelation (spatial contagion in a seismic
context) is considered. We fit various bivariate count models and find
that for many contiguous tectonic plates, spatial contagion is
significant in both directions. Furthermore, ignoring cross
autocorrelation can underestimate the potential for high numbers of
occurrences over the short-term. Our overall findings seem to further
confirm \cite{Parsons}.

\vspace*{.3in}

\noindent\textsc{Keywords}: {autoregressive; Granger causality; counts; earthquakes; INAR; multivariate INAR; Poisson process}
\vspace*{.3in}

\begin{center}
\textbf{Acknowledgment}
\end{center}

The authors thank Lionel Truquet for pointing out interesting references on that topic, and Fran\c{c}ois Bergeron for stimulating discussions on matrix based equations.

\newpage

\section{Introduction and motivation}

\subsection{Motivation}


Autoregression in the sense of ARIMA time series models cannot be directly
applied to integer values for obvious reasons. Thus, most integer-valued
autoregressive (INAR)\ time series models are based upon thinning operators
such as \cite{Steutel} (see also the excellent survey of
thinning operators by \cite{Weib}). Such models have been mainly
proposed and investigated by \cite{McKenzie} and \cite{AOA}
for first order autocorrelation, and by \cite{DuLi} for autocorrelation
of order $p$. \cite{Gauthier},  \cite{DGL} and \cite{Latour2} have also investigated a slightly more generalized
type of thinning operator than \cite{Steutel}, in models known
as generalized INAR (or GINAR). The statistical and actuarial literature has
multiple successful applications of INAR-type of models (see for example
\cite{Gourieroux} and \cite{BDG} where both papers treat
car insurance problems).

In a multivariate setting, the properties of a multivariate INAR\ (MINAR)
model of order 1 (based upon independent binomial thinning operators) have
been derived in \cite{franke} while the multivariate GINAR of order $%
p$ is presented in \cite{Latour1}. However, there are very few attempts in
the literature to estimate and use these types of models\footnote{%
\cite{Heinen} use a Vector Autoregression model for the mean of
two Poisson-type of random variables. Although the ultimate goal is to
represent joint integer-valued random variables, the approach taken is very
different from multivariate INAR-types of models.}. One notable exception is
\cite{karlis} and \cite{pedeli09} who investigated the bivariate INAR\ model of
order 1 with Poisson and negative binomial innovations with an application
to the number of daytime and nighttime accidents. In their papers, the
autoregression matrix is diagonal, meaning there is no cross-autocorrelation
in the counts.

Insurance policies and earthquake catastrophe (cat) derivatives (such as
cat-bonds and cat-options) offer protection against earthquake risk in
exchange for periodic premiums. Thus, one important component in these
contracts is the number of earthquakes at various locations. Earthquake
count models are mostly based upon the Poisson process (\cite{Utsu}, \cite{Gardner}, 
\cite{Lomnitz}, \cite{Kagan}), Cox process
(self-exciting, cluster or branching processes, stress-release models (see
\cite{Rathbun} for a review), or Hidden Markov Models (HMM) (see \cite{zucchini} and \cite{Orfanogiannaki}\footnote{%
For a brief summary of statistical and stochastic models in seismology, see
\cite{vere-jones}.}. However, these models are focused toward a single
location whereas seismic risk can also be influenced by shocks that occurred
at other locations (see e.g. space-time Poisson process in \cite{Ogata}, \cite{zhuang-ogata} or \cite{Schoenberg}). Thus, one of the purposes of this paper, is to propose a
bivariate INAR\ model that accounts for cross-autocorrelation in the counts.
From a seismological standpoint, that would mean the earthquake count at a
given location can be function of the past earthquake counts at that
site and at another site. These areas can be tectonic plates,
regions, cities or points on a given geological fault.

The main objective of this paper is to investigate the effects of seismic
space-time contagion (or clustering) on various risk management applications
using seismological data and a specific model. Risk management
considerations can be viewed over various time horizons. For example, prices
of cat-derivatives will be influenced by short-term earthquake risk dynamics
because a lack of an appropriate earthquake count prediction can mean
arbitrage profits or losses may occur on the markets. Insurance and
reinsurance contracts are managed over a much longer time horizon. 

\subsection{Outline of the paper}

\cite{Parsons} have confirmed that major earthquakes might have a significant impact on the number of earthquakes that occur during the hours following the main shock, but only in an area close to the main shock. They do also prove that there is no remote and large earthquakes beyond the main shock region.
Figure \ref{Fig:Nature-style-1} plots the number of quakes following a big one (magnitude exceeding 6.5), either within or outside a 2,000 km area from the main shock. One of the aims of our paper is to study the dynamics of the number of earthquakes, taking into account spatial contagion over tectonic plates. Using plates instead of distance (as in \cite{Parsons} ) allows us to work with multivariate counting processes.

\begin{figure}[ht]
\begin{center}
\includegraphics[width=0.8\textwidth]{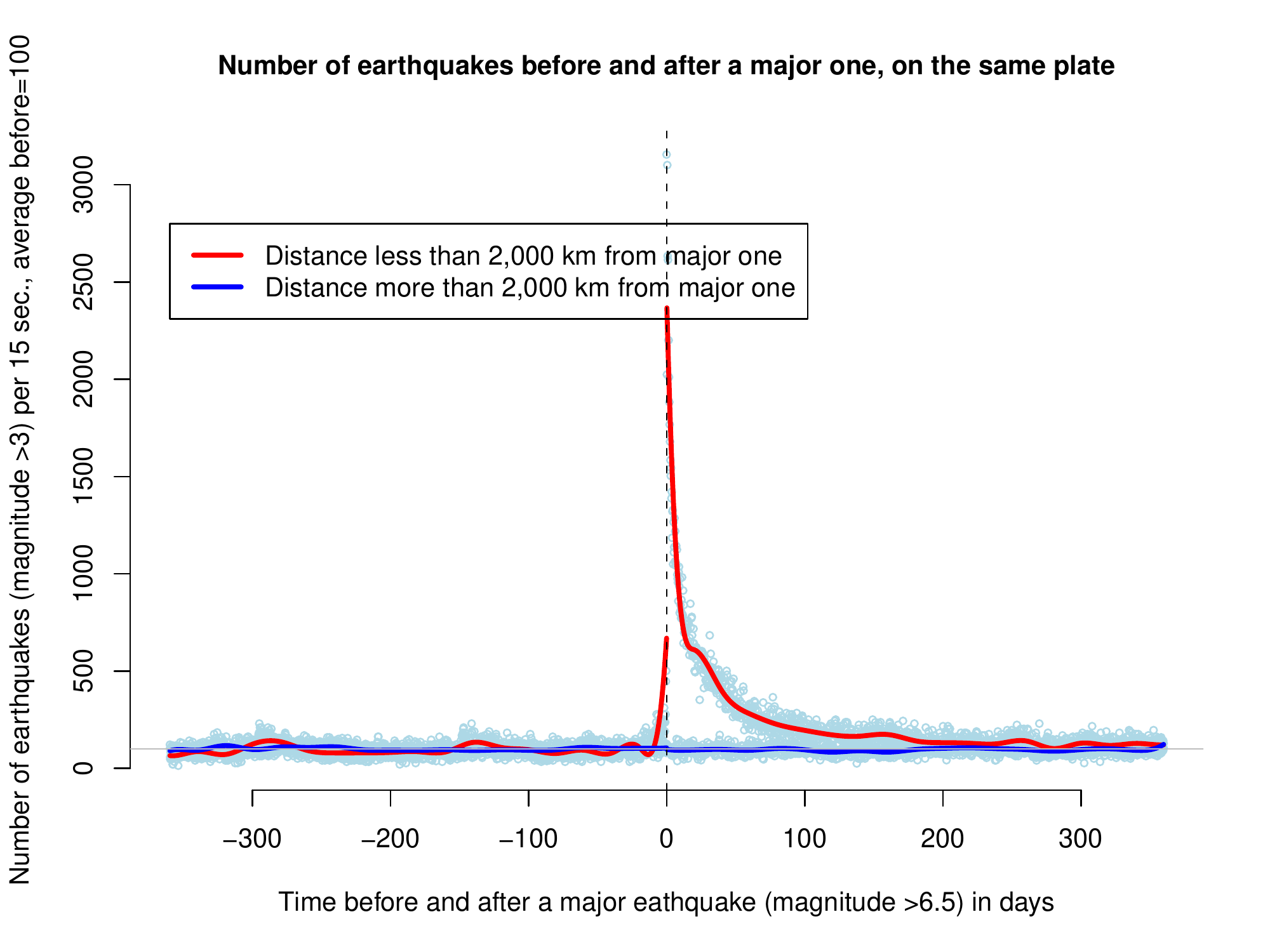}
\end{center}
\caption{Number of earthquakes (magnitude exceeding 2.0) per 15 seconds, following a large earthquake (of magnitude 6.5) either close to the main shock (less than 2,000 km) or far away (more than 2,000 km). Counts were normalized so that the expected number of earthquakes before is 100 in the two regions. Plain lines are spline regressions, either before or after the main shock).}\label{Fig:Nature-style-1}
\end{figure}

The paper is structured as follows. In Section \ref{section:MINAR}, we present the theoretical
framework of the multivariate INAR\ of order 1 and the most important
results. Some results have already  been derived in \cite{franke}
but are presented here for the sake of completeness and consistency with our given
notation. Additional (theoretical) results in terms of moments, auto-covariance functions and predictions are given in this section.
In Section \ref{section:Granger}, we introduce Granger causality tests, derived in the given context of BINAR(1) processes, including an interpretation of each coefficient in terms of causal effect.
Section \ref{section:binar-poisson} presents the specific application to the bivariate
INAR\ model with cross autocorrelation and Poisson innovations. A Monte Carlo study will also illustrate how the maximum
likelihood estimators behave (theoretical results are given in Section \ref{section:MINAR}, but only in the context of a full cross-correlation matrix).
Finally, Section \ref{section:application} provides various applications of the
model with earthquake counts. In subsection \ref{section-magn}, several BINAR(1) processes are fitted over different tectonic plates and magnitudes. We confirm here the conclusions of \cite{Parsons} claiming that the onset of a large earthquake does not cause other large ones at a very long distance. There might be contagion, but it will be between two close areas (e.g. contiguous tectonic plates), and over a short period of time (a few hours, perhaps a few days, but not much longer). 
In subsection \ref{sec:aftershock}, we have also observed that major earthquakes will generate several medium-size earthquakes on the same tectonic plate (so called aftershocks). Foreshocks were also observed, meaning that medium-size earthquakes might announce the arrival of more important earthquakes. To conclude, in subsection \ref{section:rm}, we compare the sum of counts of earthquakes on two plates, assuming that there is - or not - cross correlation between consecutive days.

\section{Multivariate integer-valued autoregression of order 1, MINAR(1)}\label{section:MINAR}

As mentioned in \cite{fokianos}, a natural way to define a linear model for counts might be to use the Poisson regression to derive an {\em autoregressive process}. Let $(N_t)$ denote a count time series, and $(\mathcal{F}_t)$ the associated filtration. A GARCH-type model can be considered, as in \cite{Ferland}
$$
N_t|\mathcal{F}_{t-1} \sim \mathcal{P}(\lambda_t), \text{ where }\lambda_t=\alpha_0+\sum_{h=1}^p\alpha_h N_{t-h} +\sum_{k=1}^q\beta_k \lambda_{t-k} .
$$
But one can easily imagine that it could be complicated (and not tractable) to extend such a process in higher dimension. An alternative can be to use a {\em thinning operator} as in \cite{AOA} or \cite{McKenzie}. The idea (introduced in \cite{Steutel}) is to define  $\circ$ as
$$
p\circ N = Y_1+\cdots+Y_N \text{ if }N\neq0, \text{ and }0\text{ otherwise},
$$
where $N$ is a random variable with values in $\mathbb{N}$, $p\in[0,1]$, and $Y_1,Y_2,\cdots$ are i.i.d. Bernoulli
variables, independent of $N$, with $\mathbb{P}(Y_i=1)=p$. Thus $p\circ N$ is a compound sum of i.i.d. Bernoulli variables.
Hence, given $N$, $p\circ N $ has a binomial distribution
with parameters $N$ and $p$. Based on that thinning operator, the integer autoregressive process of order $1$ is defined as
$$
N_t = p\circ N_{t-1}+\varepsilon_t=\sum_{i=1}^{N_{t-1}} Y_i +\varepsilon_t,
$$
where $(\varepsilon_t)$ is a sequence of i.i.d. integer valued random variables. Such  process will be called INAR(1). Note that such a process can be related to Galton-Watson process with immigration, and it is a Markov chain with integer states. As mentioned in  \cite{AOA}, if $(\varepsilon_t)$ are Poisson random variables, then $(N_t)$ will also be a sequence of Poisson random variables, and the estimation can be done easily using a method of moments estimators or maximum likelihood techniques, for $p$ and $\lambda=\mathbb{E}(\varepsilon_t)$.
One of the main interest of the thinning operator approach is that it can be easily extended in higher dimension, as in \cite{franke} (we will also provide new results, as well as new interpretations, e.g. in terms of causality).

\subsection{Thinning $\circ$ operator in dimension $d$}

As in the univariate case, before defining a multivariate counting process 
$\boldsymbol{N_t}:=(N_{1,t},\cdots,N_{d,t})$, we need to define a multivariate thinning operator for a random vector 
$\boldsymbol{N}:=(N_1,\cdots,N_d)$  with values in $\mathbb{N}^d$.
Let $\boldsymbol{P}:=[p_{i,j}]$ be a $d\times d$ matrix with entries in $[0,1]$. If 
$\boldsymbol{N}=(N_1,\cdots,N_d)$ is a random vector with values in $\mathbb{N}^d$, then
$\boldsymbol{P}\circ \boldsymbol{N}$ is a $d$-dimensional random vector, with $i$-th component
$$
[\boldsymbol{P}\circ \boldsymbol{N}]_i = \sum_{j=1}^d p_{i,j}\circ X_j,
$$
for all $i=1,\cdots,d$, where all counting variates $Y$ in $p_{i,j}\circ X_j$'s are assumed to be independent.

Note that $\boldsymbol{P}\circ (\boldsymbol{Q}\circ \boldsymbol{N}) \overset{\mathcal{L}}{=}[\boldsymbol{P}\boldsymbol{Q}]\circ \boldsymbol{N}$.
Further, from Lemma 1 in \cite{franke}, $\mathbb{E}\left(\boldsymbol{P}\circ \boldsymbol{N}\right)=\boldsymbol{P}\mathbb{E}( \boldsymbol{N})$, and
$$
\mathbb{E}\left((\boldsymbol{P}\circ \boldsymbol{N})(\boldsymbol{P}\circ \boldsymbol{N})'\right)
=\boldsymbol{P}\mathbb{E}( \boldsymbol{N} \boldsymbol{N}')\boldsymbol{P}'+\Delta,$$
with $\Delta:=\text{diag}(\boldsymbol{V}\mathbb{E}( \boldsymbol{N})
)$ where $\boldsymbol{V}$ is the $d\times d$ matrix with entries $p_{i,j}(1-p_{i,j})$.

\begin{definition}
A time series $(\boldsymbol{N}_t)$ with values in $\mathbb{N}^d$ is called a $d$-variate INAR(1) process if
\begin{equation}\label{eq:INARD-d}
\boldsymbol{N}_t = \boldsymbol{P}\circ \boldsymbol{N}_{t-1} + \boldsymbol{\varepsilon}_t
\end{equation}
for all $t$, for some $d\times d$ matrix $\boldsymbol{P}$ with entries in $[0,1]$, and some i.i.d.
random vectors $\boldsymbol{\varepsilon}_t$ with values in $\mathbb{N}^d$.
\end{definition}

\begin{remark}
\cite{karlis} and \cite{pedeli09} defined bivariate INAR(1) processes where matrix  $\boldsymbol{P}$ is a diagonal matrix. 
\end{remark}

\begin{remark}\label{remark:chain-markov}
$(\boldsymbol{N}_t)$ is a Markov chain with states in $\mathbb{N}^d$ with transition probabilities
\begin{equation}\label{eq:transition}
\pi(\boldsymbol{n}_t,\boldsymbol{n}_{t-1})=\mathbb{P}(\boldsymbol{N}_t=\boldsymbol{n}_t|
\boldsymbol{N}_{t-1}=\boldsymbol{n}_{t-1})
\end{equation}
satisfying
$$
\pi(\boldsymbol{n}_t,\boldsymbol{n}_{t-1})=\sum_{\boldsymbol{k}=0}^{\boldsymbol{n}_{t}}
\mathbb{P}(\boldsymbol{P}\circ \boldsymbol{n}_{t-1}=\boldsymbol{n}_{t}-
\boldsymbol{k})\cdot \mathbb{P}(\boldsymbol{\varepsilon}= \boldsymbol{k}).
$$
\end{remark}

\begin{remark}\label{remark:Perron-Frobenius}
Since $\boldsymbol{P}$ has entries in $[0,1]$, using a variant of Perron-Frobenius theorem, 
there exists an eigenvalue $\kappa_1$ of $\boldsymbol{P}$ such that $\kappa_1\geq |\kappa_i|$ for all other 
eigenvalues of $\boldsymbol{P}$. And the associated eigenvector 
$\boldsymbol{v}_1$ satistfies $\boldsymbol{v}_1\geq \boldsymbol{0}$.
Further, if $\boldsymbol{P}$ has strictly positive entries, $p_{i,j}\in(0,1]$, then  $\kappa_1> |\kappa_i|$
and $\boldsymbol{v}_1> \boldsymbol{0}$.
\end{remark}

From Remarks \ref{remark:chain-markov} and \ref{remark:Perron-Frobenius}, we can derive sufficient conditions so that there exists a {\em stationary} MINAR(1) process (based on Theorem 1 and Lemma 2 in \cite{franke}). The proof is based on the fact that under those assumptions,
the Markov chain is irreducible and aperiodic. One can prove that $\boldsymbol{0}$ is a positive recurrent state, and from Theorem 1.2.2 in  \cite{rosenblatt}, there exists a strictly stationary solution.

\begin{proposition}
Let $\boldsymbol{P}$ with entries in $(0,1)$, such that its largest eigenvalue is less than $1$,
and assume that $\mathbb{P}(\boldsymbol{\varepsilon}_t=
\boldsymbol{0})\in(0,1)$ with $\mathbb{E}(\|\boldsymbol{\varepsilon}_t\|_{\infty})<\infty$, then 
there exists a strictly stationary $d$-variate INAR(1) process satisfying Equation $(\ref{eq:INARD-d})$.
\end{proposition}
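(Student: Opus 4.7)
The plan is to invoke the Markov-chain structure recorded in Remark~\ref{remark:chain-markov} and verify the three ingredients used in Theorem~1.2.2 of \cite{rosenblatt}: irreducibility on the class containing $\boldsymbol{0}$, aperiodicity, and positive recurrence of $\boldsymbol{0}$. Existence of a strictly stationary solution to~(\ref{eq:INARD-d}) then follows by initializing from the unique invariant distribution.

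First I would deal with the topology of the chain. From any state $\boldsymbol{m}\in\mathbb{N}^d$, one has
$$\pi(\boldsymbol{0},\boldsymbol{m}) \;\geq\; \mathbb{P}(\boldsymbol{P}\circ\boldsymbol{m}=\boldsymbol{0})\cdot\mathbb{P}(\boldsymbol{\varepsilon}_t=\boldsymbol{0}) \;=\; \Bigl(\prod_{i,j}(1-p_{i,j})^{m_j}\Bigr)\cdot\mathbb{P}(\boldsymbol{\varepsilon}_t=\boldsymbol{0})\;>\;0,$$
because $p_{i,j}\in(0,1)$ and $\mathbb{P}(\boldsymbol{\varepsilon}_t=\boldsymbol{0})>0$ by assumption. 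Hence every state leads to $\boldsymbol{0}$ in one step, so the chain admits a unique essential class containing $\boldsymbol{0}$ on which it is irreducible. The same inequality applied at $\boldsymbol{m}=\boldsymbol{0}$ gives $\pi(\boldsymbol{0},\boldsymbol{0})>0$, which yields aperiodicity of $\boldsymbol{0}$.

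The substantive step, and the one I expect to be the main obstacle, is positive recurrence. I would use a Foster--Lyapunov drift argument based on the Perron--Frobenius vector from Remark~\ref{remark:Perron-Frobenius}. Since $\boldsymbol{P}$ has strictly positive entries, $\boldsymbol{P}'$ does too, so it possesses a strictly positive left eigenvector $\boldsymbol{u}>\boldsymbol{0}$ of $\boldsymbol{P}$ with eigenvalue $\kappa_1<1$, i.e.\ $\boldsymbol{u}'\boldsymbol{P}=\kappa_1\boldsymbol{u}'$. Set $V(\boldsymbol{n}):=\boldsymbol{u}'\boldsymbol{n}$; then $V$ is nonnegative and proper (its sublevel sets are finite, since $\boldsymbol{u}>\boldsymbol{0}$). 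Using $\mathbb{E}(\boldsymbol{P}\circ\boldsymbol{N}\mid\boldsymbol{N})=\boldsymbol{P}\boldsymbol{N}$ recorded in Section~\ref{section:MINAR},
$$\mathbb{E}\bigl[V(\boldsymbol{N}_t)\,\big|\,\boldsymbol{N}_{t-1}=\boldsymbol{n}\bigr] \;=\; \boldsymbol{u}'\boldsymbol{P}\boldsymbol{n}+\boldsymbol{u}'\mathbb{E}(\boldsymbol{\varepsilon}_t) \;=\; \kappa_1 V(\boldsymbol{n}) + c,$$
where $c:=\boldsymbol{u}'\mathbb{E}(\boldsymbol{\varepsilon}_t)<\infty$ thanks to the hypothesis $\mathbb{E}(\|\boldsymbol{\varepsilon}_t\|_\infty)<\infty$. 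Since $\kappa_1<1$, this is a geometric drift towards the finite set $\{\boldsymbol{n}:V(\boldsymbol{n})\leq c/(1-\kappa_1)+1\}$, which by Foster's criterion implies positive recurrence of $\boldsymbol{0}$.

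Finally I would combine the three pieces: irreducibility on the class of $\boldsymbol{0}$, aperiodicity at $\boldsymbol{0}$, and positive recurrence of $\boldsymbol{0}$ produce a unique invariant probability measure $\mu$ on $\mathbb{N}^d$ by Theorem~1.2.2 of \cite{rosenblatt}. Initializing $\boldsymbol{N}_0\sim\mu$ and iterating~(\ref{eq:INARD-d}) with the i.i.d.\ sequence $(\boldsymbol{\varepsilon}_t)$ and independent Bernoulli arrays produces a strictly stationary MINAR(1) process, completing the proof. The delicate point to be careful about in the write-up is the proper identification of the communicating class (which may be a strict subset of $\mathbb{N}^d$ if the support of $\boldsymbol{\varepsilon}_t$ is small) and the verification that the drift inequality holds on that class, but the argument above works verbatim on any absorbing sub-state-space containing $\boldsymbol{0}$.
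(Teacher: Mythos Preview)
Your proposal is correct and follows essentially the same route as the paper: the paper's argument (given just before the proposition, and deferring details to Theorem~1 and Lemma~2 of \cite{franke}) is precisely to establish irreducibility and aperiodicity of the Markov chain, prove that $\boldsymbol{0}$ is positive recurrent, and then invoke Theorem~1.2.2 of \cite{rosenblatt}. You have supplied the missing detail for positive recurrence via a clean Foster--Lyapunov drift with the Perron--Frobenius left eigenvector, which is exactly in the spirit of the paper's sketch.
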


\begin{lemma}
Let $(\boldsymbol{N}_t)$ denote a stationary $d$-variate INAR(1) process, with autoregressive matrix $\boldsymbol{P}$ with entries in 
$(0,1)$, then $(\boldsymbol{N}_t)$ admits a $d$-variate INMA($\infty$) representation
$$
\boldsymbol{N}_t = \sum_{h=0}^{\infty}  \boldsymbol{P}^h \circ \boldsymbol{\varepsilon}_{t-h}.
$$
\end{lemma}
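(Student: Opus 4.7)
The plan is to iterate the defining recurrence $\boldsymbol{N}_t = \boldsymbol{P}\circ \boldsymbol{N}_{t-1} + \boldsymbol{\varepsilon}_t$ and pass to the limit. First, I would establish by induction on $H\ge 0$ the distributional identity
\begin{equation*}
\boldsymbol{N}_t \;\overset{\mathcal{L}}{=}\; \boldsymbol{P}^{H+1}\circ \boldsymbol{N}_{t-H-1} \;+\; \sum_{h=0}^{H} \boldsymbol{P}^{h}\circ \boldsymbol{\varepsilon}_{t-h},
\end{equation*}
where the $H+2$ thinnings on the right are driven by mutually independent Bernoulli arrays. The inductive step substitutes the INAR(1) equation into the remainder $\boldsymbol{P}^{H+1}\circ \boldsymbol{N}_{t-H-1}$ and invokes the composition law $\boldsymbol{P}\circ(\boldsymbol{Q}\circ \boldsymbol{N})\overset{\mathcal{L}}{=}[\boldsymbol{P}\boldsymbol{Q}]\circ \boldsymbol{N}$ recalled at the start of the section, together with the distributivity of $\boldsymbol{P}\circ\,\cdot\,$ over independent sums, which follows directly from the definition of the thinning operator as a compound sum of independent Bernoullis.

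Next, I would control the remainder $\boldsymbol{R}_H := \boldsymbol{P}^{H+1}\circ \boldsymbol{N}_{t-H-1}$. Lemma 1 of \cite{franke} gives $\mathbb{E}(\boldsymbol{R}_H) = \boldsymbol{P}^{H+1}\mathbb{E}(\boldsymbol{N}_{t-H-1})$; by stationarity $\mathbb{E}(\boldsymbol{N}_{t-H-1})$ is a fixed finite vector (the preceding proposition gives strict stationarity and the moment assumption on $\boldsymbol{\varepsilon}_t$ ensures finiteness), and since the spectral radius of $\boldsymbol{P}$ is strictly less than $1$, $\boldsymbol{P}^{H+1}\to \boldsymbol{0}$ entrywise. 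Hence $\mathbb{E}(\boldsymbol{R}_H)\to \boldsymbol{0}$; because $\boldsymbol{R}_H$ takes non-negative integer values, convergence of the mean to zero is equivalent to $\boldsymbol{R}_H\to \boldsymbol{0}$ in $L^1$, and hence in probability.

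For the partial sums $\boldsymbol{S}_H := \sum_{h=0}^{H} \boldsymbol{P}^{h}\circ \boldsymbol{\varepsilon}_{t-h}$, each summand is a non-negative integer vector, so $\boldsymbol{S}_H$ is componentwise non-decreasing in $H$ and thus converges almost surely to some $[0,\infty]^d$-valued random vector $\boldsymbol{S}_\infty$. Taking expectations, $\mathbb{E}(\boldsymbol{S}_H)=\bigl(\sum_{h=0}^{H}\boldsymbol{P}^{h}\bigr)\mathbb{E}(\boldsymbol{\varepsilon})$, bounded above by $(\boldsymbol{I}-\boldsymbol{P})^{-1}\mathbb{E}(\boldsymbol{\varepsilon})<\infty$ by the spectral radius condition and $\mathbb{E}(\|\boldsymbol{\varepsilon}\|_\infty)<\infty$; monotone convergence then yields $\boldsymbol{S}_\infty$ a.s.\ finite and $\boldsymbol{S}_H\to \boldsymbol{S}_\infty$ in $L^1$. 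Plugging both limits into the iterated identity gives $\boldsymbol{N}_t \overset{\mathcal{L}}{=}\boldsymbol{S}_\infty$, which is the claimed INMA($\infty$) representation.

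The main technical obstacle I anticipate is the bookkeeping of the independent Bernoulli families during the iteration: the composition identity is only an equality in law, so at each inductive step one must explicitly argue that the freshly introduced thinnings of $\boldsymbol{\varepsilon}_{t-H-1}$ and of $\boldsymbol{N}_{t-H-2}$ can be taken independent of all thinnings introduced earlier, and of all $\boldsymbol{\varepsilon}_{t-h}$ with $h\le H$. Once this coupling is in place, each intermediate distributional equality upgrades to an almost-sure equality on a common probability space, and the monotone/$L^1$ limiting argument above is routine.
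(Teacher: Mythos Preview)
The paper does not actually give a proof of this lemma; it is stated and then the text moves directly to the next subsection (the surrounding results are attributed to \cite{franke}, and this one is evidently taken for granted from that framework). Your proposal is the standard argument and is correct: iterate the recursion, use the spectral-radius condition to kill the remainder $\boldsymbol{P}^{H+1}\circ \boldsymbol{N}_{t-H-1}$ in $L^1$, and use monotonicity plus the summability of $\boldsymbol{P}^h\boldsymbol{\lambda}$ to get almost-sure and $L^1$ convergence of the partial sums. The passage to the limit is then just Slutsky's theorem applied to the distributional identity $\boldsymbol{N}_t\overset{\mathcal{L}}{=}\boldsymbol{R}_H+\boldsymbol{S}_H$. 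Your caveat about the Bernoulli bookkeeping is the right one: the composition law $\boldsymbol{P}\circ(\boldsymbol{Q}\circ\boldsymbol{N})\overset{\mathcal{L}}{=}[\boldsymbol{P}\boldsymbol{Q}]\circ\boldsymbol{N}$ is only in distribution, so each step of the induction requires fresh independent Bernoulli arrays, and the resulting identity is in law; to obtain the equality as written in the lemma (with $=$ rather than $\overset{\mathcal{L}}{=}$) one either defines the stationary solution \emph{as} the infinite sum, or performs the explicit coupling you outline.
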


\subsection{Maximum likelihood estimation in $d$-variate INAR(1) processes}

Consider here a finite time series $\underline{\boldsymbol{N}}=(\boldsymbol{N}_0,
\boldsymbol{N}_1,\cdots,\boldsymbol{N}_n)$, observed from time $t=0$ until time $t=n$. The conditional
log-likelihood is
\begin{equation}\label{eq:likelihood}
\log \mathcal{L}(\underline{\boldsymbol{N}},\boldsymbol{\theta}|\boldsymbol{N}_0)
=\sum_{t=1}^n \log \pi(\boldsymbol{N}_{t-1},\boldsymbol{N}_t)
\end{equation}
where $\pi$ is the transition probability of the Markov chain, given by Equation $(\ref{eq:transition})$. Here parameter $\boldsymbol{\theta}$
is related to the autoregressive matrix $\boldsymbol{P}$ as well as parameters of the joint distribution of
the noise process, denoted $\boldsymbol{\lambda}$. For convenience, assume that $\boldsymbol{\lambda}=(\boldsymbol{\lambda}_0,\boldsymbol{\lambda}_1)$ where $\boldsymbol{\lambda}_0$ are parameters related to the innovation process $(\boldsymbol{\varepsilon}_t)$ margins, and $\boldsymbol{\lambda}_1$ to the dependence among components of the innovation.
Hence, here $\boldsymbol{\theta}=(\boldsymbol{P},\boldsymbol{\lambda})$ on some open sets $(0,1)^{d^2}\times \ell$.
From Theorem 2.2 in \cite{billingsley}, since $(\boldsymbol{N}_t)$ is a Markov chain, under standard assumptions, we can
obtain asymptotic normality of parameters.

\begin{proposition}\label{th:converge-CML}
Let $(\boldsymbol{N}_t)$ be a $d$-variate INAR(1) process satisfying stationary conditions, as well as technical assumptions (called C1-C6 in \cite{franke}),
then the conditional maximum likelihood estimate $\widehat{\boldsymbol{\theta}}$ of 
$\boldsymbol{\theta}$ is asymptotically normal, 
$$
\sqrt{n}(\widehat{\boldsymbol{\theta}}-\boldsymbol{\theta})\overset{\mathcal{L}}{\rightarrow}
\mathcal{N}(\boldsymbol{0},\Sigma^{-1}(\boldsymbol{\theta})),\text{ as }n\rightarrow\infty.
$$
Further,
$$
2[\log \mathcal{L}(\underline{\boldsymbol{N}},\widehat{\boldsymbol{\theta}}|\boldsymbol{N}_0)-
\log \mathcal{L}(\underline{\boldsymbol{N}},\boldsymbol{\theta}|\boldsymbol{N}_0)]
\overset{\mathcal{L}}{\rightarrow}
\chi^2(d^2+\text{dim}(\boldsymbol{\lambda})),\text{ as }n\rightarrow\infty.
$$
\end{proposition}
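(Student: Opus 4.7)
The plan is to recognise the result as a direct application of the classical asymptotic theory of maximum likelihood for stationary ergodic Markov chains (Theorem 2.2 of \cite{billingsley}), for which the work reduces to verifying that the MINAR(1) transition kernel $\pi(\cdot,\cdot;\boldsymbol{\theta})$ given in (\ref{eq:transition}) satisfies the regularity conditions C1--C6 used in \cite{franke}. By Remark \ref{remark:chain-markov} the observed sequence is a Markov chain, by the preceding proposition it admits a strictly stationary solution under the stated assumptions, and by the Perron--Frobenius argument of Remark \ref{remark:Perron-Frobenius} together with $\mathbb{P}(\boldsymbol{\varepsilon}_t=\boldsymbol{0})\in(0,1)$ the chain is irreducible and aperiodic, hence ergodic.

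Second, I would check smoothness and integrability of the score. Because $\pi(\boldsymbol{n}_t,\boldsymbol{n}_{t-1};\boldsymbol{\theta})$ is a finite sum (over $\boldsymbol{k}\leq \boldsymbol{n}_t$) of products of binomial probability masses in the entries $p_{i,j}$ times the innovation mass $\mathbb{P}(\boldsymbol{\varepsilon}=\boldsymbol{k};\boldsymbol{\lambda})$, it is polynomial in $\boldsymbol{P}$ and, under standard parametric assumptions on $\boldsymbol{\lambda}$, smooth on the open set $\Theta=(0,1)^{d^2}\times\ell$; hence $\log \pi$ is three times continuously differentiable on $\Theta$. The derivatives of the binomial masses are bounded by polynomials in the counts, so using stationarity and the moment condition $\mathbb{E}(\|\boldsymbol{\varepsilon}_t\|_\infty)<\infty$ (strengthened to a second moment if needed for the particular innovation family), the score $\nabla_{\boldsymbol{\theta}}\log\pi(\boldsymbol{N}_{t-1},\boldsymbol{N}_t;\boldsymbol{\theta})$ is square integrable and the Hessian is dominated by an integrable random variable in a neighbourhood of $\boldsymbol{\theta}$.

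Third, I would treat the two genuinely model-specific points: identifiability and non-singularity of the Fisher information $\Sigma(\boldsymbol{\theta})=-\mathbb{E}[\nabla^2 \log\pi(\boldsymbol{N}_{t-1},\boldsymbol{N}_t;\boldsymbol{\theta})]$. Identifiability follows from the fact that the conditional mean $\mathbb{E}(\boldsymbol{N}_t\mid\boldsymbol{N}_{t-1})=\boldsymbol{P}\boldsymbol{N}_{t-1}+\mathbb{E}(\boldsymbol{\varepsilon})$ is affine with slope $\boldsymbol{P}$ and intercept determined by $\boldsymbol{\lambda}_0$, while higher conditional mixed cumulants encode $\boldsymbol{\lambda}_1$; combined with the fact that the support of $\boldsymbol{N}_{t-1}$ is all of $\mathbb{N}^d$ under stationarity this pins down $\boldsymbol{\theta}$ uniquely. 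For $\Sigma(\boldsymbol{\theta})\succ 0$ I would argue that a null direction $\boldsymbol{u}$ would force $\boldsymbol{u}^\top\nabla_{\boldsymbol{\theta}}\log\pi\equiv 0$ on a full-measure set of $(\boldsymbol{N}_{t-1},\boldsymbol{N}_t)$, and then use the polynomial structure of $\pi$ to conclude $\boldsymbol{u}=\boldsymbol{0}$; this is the subtlest step and I expect it to be the main obstacle, since it really uses the algebraic form of the innovation family.

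Once C1--C6 are verified, Billingsley's theorem delivers consistency and the martingale CLT applied to the score (which is a stationary square-integrable martingale difference sequence under $\mathbb{P}_{\boldsymbol{\theta}}$, since $\mathbb{E}[\nabla\log\pi\mid\mathcal{F}_{t-1}]=\boldsymbol{0}$) together with the ergodic theorem applied to the observed information yields the stated asymptotic normality. For the likelihood ratio statement, I would perform a second-order Taylor expansion of $\log\mathcal{L}$ around $\boldsymbol{\theta}$ at $\widehat{\boldsymbol{\theta}}$, noting that the score vanishes at $\widehat{\boldsymbol{\theta}}$, to obtain
$$
2\bigl[\log\mathcal{L}(\underline{\boldsymbol{N}},\widehat{\boldsymbol{\theta}}\mid\boldsymbol{N}_0)-\log\mathcal{L}(\underline{\boldsymbol{N}},\boldsymbol{\theta}\mid\boldsymbol{N}_0)\bigr] = \sqrt{n}(\widehat{\boldsymbol{\theta}}-\boldsymbol{\theta})^\top\,\widehat{\Sigma}_n\,\sqrt{n}(\widehat{\boldsymbol{\theta}}-\boldsymbol{\theta})+o_{\mathbb{P}}(1),
$$
where $\widehat{\Sigma}_n\to\Sigma(\boldsymbol{\theta})$ in probability by the ergodic theorem; combined with the first assertion this yields a $\chi^2$ limit with $d^2+\text{dim}(\boldsymbol{\lambda})$ degrees of freedom.
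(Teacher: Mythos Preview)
The paper does not give a proof of this proposition at all: it states it as an immediate consequence of Theorem~2.2 in \cite{billingsley}, with the regularity conditions C1--C6 of \cite{franke} taken as hypotheses rather than verified. Your proposal follows the same route---recognise the process as a stationary ergodic Markov chain and invoke Billingsley---but you go substantially further by sketching how C1--C6 (smoothness of $\pi$, integrability and square-integrability of the score, identifiability, non-singularity of $\Sigma(\boldsymbol{\theta})$) could actually be checked for the MINAR(1) transition kernel. That is more than the paper does, since in the statement those conditions are \emph{assumed}; the paper offloads all of the verification to \cite{franke}. Your Taylor-expansion derivation of the $\chi^2$ limit for the likelihood ratio is likewise standard and correct, and again not spelled out in the paper. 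In short, your approach coincides with the paper's (cite Billingsley), and the extra detail you supply on C1--C6 is a genuine addition rather than a deviation; just be aware that the non-singularity of the Fisher information, which you correctly flag as the subtlest step, is precisely the kind of condition the paper deliberately leaves as an assumption rather than proving.
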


\subsection{Autocorrelation matrices for MINAR(1) processes}

Based on the properties obtained in \cite{franke} it is possible to derive expressions for autocorrelation functions, which is 
a natural way to describe the dynamics of the process.

\begin{theorem}\label{theorem:gamma}
Consider a MINAR(1) process with representation $\boldsymbol{N}_t = \boldsymbol{P}\circ \boldsymbol{N}_{t-1} + \boldsymbol{\varepsilon}_t$, where $(\boldsymbol{\varepsilon}_t)$ is the innovation process, with $\boldsymbol{\lambda}:=\mathbb{E}(\boldsymbol{\varepsilon}_{t})$ and $\boldsymbol{\Lambda}:=\var(\boldsymbol{\varepsilon}_{t})$.  Let $\boldsymbol{\mu}:=\mathbb{E}(\boldsymbol{N}_{t})$ and $\boldsymbol{\gamma}(h):=\cov(\boldsymbol{N}_{t},\boldsymbol{N}_{t-h})$.
Then $
\boldsymbol{\mu}=[\mathbb{I}-\boldsymbol{P}]^{-1} \boldsymbol{\lambda}
$ and for all $h\in\mathbb{Z}$,
$
\boldsymbol{\gamma}(h)=\boldsymbol{P}^h\boldsymbol{\gamma}(0)
$ with
$\boldsymbol{\gamma}(0)$ solution of 
$
\boldsymbol{\gamma}(0)=\boldsymbol{P}\boldsymbol{\gamma}(0)\boldsymbol{P}'+ (\boldsymbol{\Delta}+\boldsymbol{\Lambda})
$, and where $\mathbb{I}$ is the $d\times d$ identity matrix.
\end{theorem}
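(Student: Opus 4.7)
The strategy is to reduce all three statements to the two identities from \cite{franke}'s Lemma 1 that were recalled just before the Definition, namely $\mathbb{E}(\boldsymbol{P}\circ \boldsymbol{N})=\boldsymbol{P}\mathbb{E}(\boldsymbol{N})$ and $\mathbb{E}\left((\boldsymbol{P}\circ \boldsymbol{N})(\boldsymbol{P}\circ \boldsymbol{N})'\right)=\boldsymbol{P}\mathbb{E}(\boldsymbol{N}\boldsymbol{N}')\boldsymbol{P}'+\boldsymbol{\Delta}$, combined with the independence of $\boldsymbol{\varepsilon}_t$ from $\boldsymbol{N}_{t-1}$ and from the thinning variates. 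For the mean, I would take expectations in Equation (\ref{eq:INARD-d}), apply the first identity and use stationarity to obtain $\boldsymbol{\mu}=\boldsymbol{P}\boldsymbol{\mu}+\boldsymbol{\lambda}$. Since the spectral radius of $\boldsymbol{P}$ is less than one by the Proposition preceding the theorem, $\mathbb{I}-\boldsymbol{P}$ is invertible and $\boldsymbol{\mu}=[\mathbb{I}-\boldsymbol{P}]^{-1}\boldsymbol{\lambda}$.

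For $\boldsymbol{\gamma}(0)$, I would expand
\[
\boldsymbol{N}_t\boldsymbol{N}_t' = (\boldsymbol{P}\circ\boldsymbol{N}_{t-1})(\boldsymbol{P}\circ\boldsymbol{N}_{t-1})' + (\boldsymbol{P}\circ\boldsymbol{N}_{t-1})\boldsymbol{\varepsilon}_t' + \boldsymbol{\varepsilon}_t(\boldsymbol{P}\circ\boldsymbol{N}_{t-1})' + \boldsymbol{\varepsilon}_t\boldsymbol{\varepsilon}_t'.
\]
Taking expectations, the cross terms equal $\boldsymbol{P}\boldsymbol{\mu}\boldsymbol{\lambda}'$ and its transpose by independence, the quadratic term in the thinning contributes $\boldsymbol{P}\mathbb{E}(\boldsymbol{N}_{t-1}\boldsymbol{N}_{t-1}')\boldsymbol{P}'+\boldsymbol{\Delta}$ by the second Franke identity, and the last term is $\boldsymbol{\Lambda}+\boldsymbol{\lambda}\boldsymbol{\lambda}'$. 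Subtracting $\boldsymbol{\mu}\boldsymbol{\mu}'$ from both sides and using the decomposition $\boldsymbol{\mu}\boldsymbol{\mu}'=\boldsymbol{P}\boldsymbol{\mu}\boldsymbol{\mu}'\boldsymbol{P}'+\boldsymbol{P}\boldsymbol{\mu}\boldsymbol{\lambda}'+\boldsymbol{\lambda}\boldsymbol{\mu}'\boldsymbol{P}'+\boldsymbol{\lambda}\boldsymbol{\lambda}'$ obtained by squaring $\boldsymbol{\mu}=\boldsymbol{P}\boldsymbol{\mu}+\boldsymbol{\lambda}$, all the cross and rank-one pieces cancel and we are left with $\boldsymbol{\gamma}(0)=\boldsymbol{P}\boldsymbol{\gamma}(0)\boldsymbol{P}'+(\boldsymbol{\Delta}+\boldsymbol{\Lambda})$.

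For the lagged autocovariance with $h\geq 1$, I would plug the recursion into $\cov(\boldsymbol{N}_t,\boldsymbol{N}_{t-h})$ and use that $\boldsymbol{\varepsilon}_t$ is independent of $\boldsymbol{N}_{t-h}$ to kill that contribution, yielding $\boldsymbol{\gamma}(h)=\cov(\boldsymbol{P}\circ\boldsymbol{N}_{t-1},\boldsymbol{N}_{t-h})$. Conditioning on $\boldsymbol{N}_{t-1}$ and using that the Bernoulli variates in the thinning are independent of the past, $\mathbb{E}[(\boldsymbol{P}\circ\boldsymbol{N}_{t-1})\boldsymbol{N}_{t-h}'\mid\boldsymbol{N}_{t-1}]=\boldsymbol{P}\boldsymbol{N}_{t-1}\boldsymbol{N}_{t-h}'$, so taking outer expectations gives $\boldsymbol{\gamma}(h)=\boldsymbol{P}\boldsymbol{\gamma}(h-1)$; a simple induction yields $\boldsymbol{\gamma}(h)=\boldsymbol{P}^h\boldsymbol{\gamma}(0)$. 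Negative lags follow from stationarity via $\boldsymbol{\gamma}(-h)=\boldsymbol{\gamma}(h)'$.

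The main obstacle is the bookkeeping for $\boldsymbol{\gamma}(0)$: one has to check that after subtracting $\boldsymbol{\mu}\boldsymbol{\mu}'$ all the cross terms $\boldsymbol{P}\boldsymbol{\mu}\boldsymbol{\lambda}'$, $\boldsymbol{\lambda}\boldsymbol{\mu}'\boldsymbol{P}'$ and the rank-one $\boldsymbol{P}\boldsymbol{\mu}\boldsymbol{\mu}'\boldsymbol{P}'$ disappear and that the only surviving contribution beyond the conjugation by $\boldsymbol{P}$ is the sum $\boldsymbol{\Delta}+\boldsymbol{\Lambda}$; this cancellation is precisely where the mean identity $\boldsymbol{\mu}=\boldsymbol{P}\boldsymbol{\mu}+\boldsymbol{\lambda}$ is used in the second moment computation. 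The autocovariance recursion $\boldsymbol{\gamma}(h)=\boldsymbol{P}\boldsymbol{\gamma}(h-1)$ is then a Yule--Walker-type identity whose only subtlety is the conditioning argument on the thinning.
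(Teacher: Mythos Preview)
Your argument is correct and follows essentially the same route as the paper: the mean via $\mathbb{E}(\boldsymbol{P}\circ\boldsymbol{N})=\boldsymbol{P}\mathbb{E}(\boldsymbol{N})$, the variance via the Franke second-moment identity plus independence of the innovation, and the lagged autocovariance by peeling off the recursion. The only organizational differences are that the paper computes $\boldsymbol{\gamma}(0)$ directly as a variance (so the cross-term cancellation you spell out is left implicit), and for $\boldsymbol{\gamma}(h)$ the paper iterates the process $h$ times to write $\boldsymbol{N}_t=\boldsymbol{P}^h\circ\boldsymbol{N}_{t-h}+\sum_{i=0}^{h-1}\boldsymbol{P}^i\circ\boldsymbol{\varepsilon}_{t-i}$ and then takes the covariance in one shot, whereas you derive the one-step Yule--Walker relation $\boldsymbol{\gamma}(h)=\boldsymbol{P}\boldsymbol{\gamma}(h-1)$ and induct; both are standard and equivalent.

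One small slip to fix: in your conditioning step you write $\mathbb{E}[(\boldsymbol{P}\circ\boldsymbol{N}_{t-1})\boldsymbol{N}_{t-h}'\mid\boldsymbol{N}_{t-1}]=\boldsymbol{P}\boldsymbol{N}_{t-1}\boldsymbol{N}_{t-h}'$, but for $h\geq 2$ the vector $\boldsymbol{N}_{t-h}$ is not $\sigma(\boldsymbol{N}_{t-1})$-measurable. Condition instead on $(\boldsymbol{N}_{t-1},\boldsymbol{N}_{t-h})$ (or on the full past $\mathcal{F}_{t-1}$): since the Bernoulli variates in the thinning at time $t$ are independent of that $\sigma$-field, the identity then holds as stated and the rest of your induction goes through unchanged.
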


\begin{proof}
Since  $\mathbb{E}\left(\boldsymbol{P}\circ \boldsymbol{N}\right)=\boldsymbol{P}\mathbb{E}( \boldsymbol{N})$, then $ \boldsymbol{\mu}=\mathbb{E}(\boldsymbol{N}_{t})$ has to satisfy
$$
 \boldsymbol{\mu}=\mathbb{E}(\boldsymbol{N}_{t})=\mathbb{E}(\boldsymbol{P}\circ \boldsymbol{N}_{t-1} + \boldsymbol{\varepsilon}_t)=\boldsymbol{P} \boldsymbol{\mu}+\boldsymbol{\lambda}, \text{ i.e. }[\mathbb{I}-\boldsymbol{P}]^{-1} \boldsymbol{\lambda}.
$$
Further
$$
\boldsymbol{\gamma}(0)=\var(\boldsymbol{N}_{t})=
\mathbb{E}\left((\boldsymbol{P}\circ \boldsymbol{N}_{t-1}+\boldsymbol{\varepsilon}_{t}- \boldsymbol{\mu})(\boldsymbol{P}\circ \boldsymbol{N}_{t-1}+\boldsymbol{\varepsilon}_{t}- \boldsymbol{\mu})'\right)
$$
Since $(\boldsymbol{\varepsilon}_{t})$ is the innovation process, and from the expression mentioned above (from \cite{franke})
$$
\var(\boldsymbol{N}_{t})=
\boldsymbol{P}\var( \boldsymbol{N}_{t-1} )\boldsymbol{P}'+\boldsymbol{\Delta}+\boldsymbol{\Lambda}
$$
thus $\boldsymbol{\gamma}(0)$ satisfies
$$
\boldsymbol{\gamma}(0)=\boldsymbol{P}\boldsymbol{\gamma}(0)\boldsymbol{P}'+ (\boldsymbol{\Delta}+\boldsymbol{\Lambda}).
$$
Finally, 
$$
\boldsymbol{\gamma}(h)=\cov(\boldsymbol{N}_{t},\boldsymbol{N}_{t-h})=\cov\left((\boldsymbol{P}\circ \boldsymbol{N}_{t-1}+\boldsymbol{\varepsilon}_{t}),\boldsymbol{N}_{t-h}\right)=\cov\left((\boldsymbol{P}\circ(\boldsymbol{P}\circ \boldsymbol{N}_{t-2}+\boldsymbol{\varepsilon}_{t-1})+\boldsymbol{\varepsilon}_{t}),\boldsymbol{N}_{t-h}\right)\cdots
$$
etc, so that
$$
\boldsymbol{\gamma}(h)=\cov\left(\boldsymbol{P}^h \circ \boldsymbol{N}_{t-h}+\sum_{i=0}^{h-1}\boldsymbol{P}^i \circ \boldsymbol{\varepsilon}_{t-i},\boldsymbol{N}_{t-h}\right)=\boldsymbol{P}^h\var(\boldsymbol{N}_{t-h})=\boldsymbol{P}^h\boldsymbol{\gamma}(0),
$$
since $(\boldsymbol{\varepsilon}_t)$ is an innovation process. {\em Q.E.D.}
\end{proof}

\begin{remark}
$\boldsymbol{\gamma}(0)$ is a covariance matrix (symmetric) solution of matrix expression
$$
\boldsymbol{Z}-\boldsymbol{P}\boldsymbol{Z}\boldsymbol{P}'=\boldsymbol{A}\text{ }(=\boldsymbol{\Delta}+\boldsymbol{\Lambda}).
$$
If $\boldsymbol{P}$ was an {\em orthogonal} matrix, the term on the left could be related to Lie bracket $[\boldsymbol{Z},\boldsymbol{P}]=\boldsymbol{Z}\boldsymbol{P}-\boldsymbol{P}\boldsymbol{Z}$. If $\boldsymbol{P}$ was diagonal, we would have obtained expression of \cite{karlis}. Thus, assuming that $\boldsymbol{P}$ can either be orthogonalized or diagonalized will lead to tractable numerical algorithm. Another numerical strategy is to seek for a fixed point in equation $\boldsymbol{Z}_n=\boldsymbol{P}\boldsymbol{Z}_{n-1}\boldsymbol{P}'+\boldsymbol{A}$ with some starting value  $\boldsymbol{Z}_0$ (e.g. $\mathbb{I}$). This numerical technique will be used in the applications (see Section $\ref{section:application}$).
\end{remark}

\subsection{Forecasting with MINAR(1) processes}

In order to derive the distribution (or moments) of $\boldsymbol{N}_{t+h}$ given $\boldsymbol{N}_{t}$ recall that
$$
\left(\boldsymbol{N}_{t+h},\boldsymbol{N}_{t}\right)=
\left(\boldsymbol{P}^h \circ \boldsymbol{N}_{t}+\sum_{i=0}^{h-1}\boldsymbol{P}^i \circ \boldsymbol{\varepsilon}_{t+h-i},\boldsymbol{N}_{t-h}\right)
$$

\begin{proposition}
Let $\boldsymbol{\lambda}:=\mathbb{E}(\boldsymbol{\varepsilon}_t)$ and $\boldsymbol{\Lambda}:=\var(\boldsymbol{\varepsilon}_t)$, then
$$
\mathbb{E}(\boldsymbol{N}_{t+h}|\boldsymbol{N}_{t})=\boldsymbol{P}^h \boldsymbol{N}_{t}+\left(\mathbb{I}+\boldsymbol{P}+
\cdots +
\boldsymbol{P}^{h-1}\right)\boldsymbol{\lambda}
$$
\end{proposition}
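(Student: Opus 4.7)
My plan is to exploit the unrolled representation displayed immediately before the statement. Iterating the defining recursion $\boldsymbol{N}_{t+h} = \boldsymbol{P}\circ \boldsymbol{N}_{t+h-1} + \boldsymbol{\varepsilon}_{t+h}$ and using the composition identity $\boldsymbol{P}\circ(\boldsymbol{Q}\circ \boldsymbol{N}) \overset{\mathcal{L}}{=} [\boldsymbol{P}\boldsymbol{Q}]\circ \boldsymbol{N}$ recalled earlier in the section, one obtains $\boldsymbol{N}_{t+h} \overset{\mathcal{L}}{=} \boldsymbol{P}^h \circ \boldsymbol{N}_t + \sum_{i=0}^{h-1}\boldsymbol{P}^i \circ \boldsymbol{\varepsilon}_{t+h-i}$. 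I then take $\mathbb{E}(\cdot\mid \boldsymbol{N}_t)$ term by term and collect.

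For the first term, the Bernoulli variates embedded in $\boldsymbol{P}^h \circ \boldsymbol{N}_t$ are, by construction, independent of $\boldsymbol{N}_t$, so the conditional analogue of the Franke identity $\mathbb{E}(\boldsymbol{P}\circ \boldsymbol{N}) = \boldsymbol{P}\mathbb{E}(\boldsymbol{N})$ gives $\mathbb{E}(\boldsymbol{P}^h \circ \boldsymbol{N}_t \mid \boldsymbol{N}_t) = \boldsymbol{P}^h \boldsymbol{N}_t$. For each $i\in\{0,\ldots,h-1\}$, the innovation $\boldsymbol{\varepsilon}_{t+h-i}$ lives in the strict future of $\boldsymbol{N}_t$ and is independent of it, and so are the Bernoulli thinning variates, so $\mathbb{E}(\boldsymbol{P}^i \circ \boldsymbol{\varepsilon}_{t+h-i} \mid \boldsymbol{N}_t) = \boldsymbol{P}^i \boldsymbol{\lambda}$. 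Summing over $i$ produces the matrix geometric sum $(\mathbb{I} + \boldsymbol{P} + \cdots + \boldsymbol{P}^{h-1})\boldsymbol{\lambda}$, which combined with the first term yields the stated formula.

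An equally short alternative is an induction on $h$ using the Markov property (Remark \ref{remark:chain-markov}): the case $h=1$ is immediate from the defining recursion, and the tower property together with $\mathbb{E}(\boldsymbol{N}_{t+h+1}\mid \boldsymbol{N}_{t+h}) = \boldsymbol{P}\boldsymbol{N}_{t+h} + \boldsymbol{\lambda}$ closes the inductive step. No part of the argument is genuinely difficult; the one point deserving attention is that conditioning on $\boldsymbol{N}_t$ leaves $\boldsymbol{P}^h \boldsymbol{N}_t$, not $\boldsymbol{P}^h \boldsymbol{\mu}$, because conditioning fixes only the count of Bernoulli trials inside the thinning and not their outcomes. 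It is also worth noting that the iterated-thinning identity above is only an equality in distribution, which nonetheless suffices because it is used exclusively inside an expectation.
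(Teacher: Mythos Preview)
Your proposal is correct. Your alternative argument by induction on $h$, starting from the one-step identity $\mathbb{E}(\boldsymbol{N}_{t+1}\mid\boldsymbol{N}_t)=\boldsymbol{P}\boldsymbol{N}_t+\boldsymbol{\lambda}$ and iterating via the tower property, is precisely the route the paper takes (its proof consists of a single sentence invoking ``recurrence'' from that one-step formula). Your primary route, reading off the conditional mean directly from the unrolled joint representation displayed just before the proposition, is an equally valid variant; the only caveat is that the phrase ``used exclusively inside an expectation'' slightly undersells what is needed, since a \emph{conditional} expectation requires equality of the joint law with $\boldsymbol{N}_t$, not merely of the marginal---but that is exactly what the displayed pairwise representation provides, so the argument goes through.
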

\begin{proof}
The conditional expectation is obtained by recurrence, since the one step ahead conditional expectation is
$$
\mathbb{E}(\boldsymbol{N}_{t+1}|\boldsymbol{N}_{t})=\mathbb{E}(\boldsymbol{P} \circ\boldsymbol{N}_{t}+\boldsymbol{\varepsilon}_t |\boldsymbol{N}_{t})=\boldsymbol{P} \boldsymbol{N}_{t}+\boldsymbol{\lambda}.
$$
 {\em Q.E.D.}
\end{proof}

For the conditional variance, it is possible to derive iterative formulas, using recursions. 

\begin{proposition}
Let $\boldsymbol{\lambda}:=\mathbb{E}(\boldsymbol{\varepsilon}_t)$ and $\boldsymbol{\Lambda}:=\var(\boldsymbol{\varepsilon}_t)$, then $\var(\boldsymbol{N}_{t+h}|\boldsymbol{N}_{t})=V_h(\boldsymbol{N}_{t})$ where $V_h(\boldsymbol{N})$ is defined recursively by $V_1(\boldsymbol{N})=\text{diag}(\boldsymbol{V}\boldsymbol{N}) +\boldsymbol{\Lambda}$ and
$$
V_h(\boldsymbol{N})=\mathbb{E}[V_{h-1}(\boldsymbol{P} \circ\boldsymbol{N}+\boldsymbol{\varepsilon})|\boldsymbol{N}]+\boldsymbol{P}^{h-1} [\text{diag}(\boldsymbol{V}\boldsymbol{N}) +\boldsymbol{\Lambda}](\boldsymbol{P}^{h-1})'.
$$
\end{proposition}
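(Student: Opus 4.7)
The plan is to argue by induction on $h$, combining the conditional law of total variance (split at time $t+1$) with the Markov property of $(\boldsymbol{N}_t)$ from Remark \ref{remark:chain-markov} and with the conditional independence of thinnings and innovations built into the MINAR(1) recursion.

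For the base case $h=1$, I would start from $\boldsymbol{N}_{t+1}=\boldsymbol{P}\circ \boldsymbol{N}_t+\boldsymbol{\varepsilon}_{t+1}$, using that conditional on $\boldsymbol{N}_t$ the thinning $\boldsymbol{P}\circ \boldsymbol{N}_t$ and the innovation $\boldsymbol{\varepsilon}_{t+1}$ are independent. The $i$-th coordinate of $\boldsymbol{P}\circ \boldsymbol{N}_t$ is a sum of independent conditionally-binomial terms $p_{i,j}\circ N_{j,t}$, so its conditional variance is $\sum_j N_{j,t}\,p_{i,j}(1-p_{i,j})=[\boldsymbol{V}\boldsymbol{N}_t]_i$; by the assumed independence of all the thinning families across $(i,j)$, distinct coordinates are conditionally uncorrelated, hence $\var(\boldsymbol{P}\circ \boldsymbol{N}_t\mid \boldsymbol{N}_t)=\text{diag}(\boldsymbol{V}\boldsymbol{N}_t)$. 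Adding $\boldsymbol{\Lambda}=\var(\boldsymbol{\varepsilon}_{t+1})$ gives $V_1(\boldsymbol{N}_t)$ as stated.

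For $h\geq 2$, I would apply the conditional law of total variance to $\boldsymbol{N}_{t+h}$ given $\boldsymbol{N}_t$, conditioning internally on $\boldsymbol{N}_{t+1}$:
$$
\var(\boldsymbol{N}_{t+h}\mid \boldsymbol{N}_t)=\mathbb{E}\bigl[\var(\boldsymbol{N}_{t+h}\mid \boldsymbol{N}_{t+1})\mid \boldsymbol{N}_t\bigr]+\var\bigl(\mathbb{E}[\boldsymbol{N}_{t+h}\mid \boldsymbol{N}_{t+1}]\mid \boldsymbol{N}_t\bigr),
$$
where the reduction from $(\boldsymbol{N}_{t+1},\boldsymbol{N}_t)$ to $\boldsymbol{N}_{t+1}$ alone inside both inner expressions uses the Markov property. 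By time-homogeneity and the induction hypothesis, $\var(\boldsymbol{N}_{t+h}\mid \boldsymbol{N}_{t+1})=V_{h-1}(\boldsymbol{N}_{t+1})=V_{h-1}(\boldsymbol{P}\circ \boldsymbol{N}_t+\boldsymbol{\varepsilon}_{t+1})$, giving the first summand. The preceding proposition gives $\mathbb{E}[\boldsymbol{N}_{t+h}\mid \boldsymbol{N}_{t+1}]=\boldsymbol{P}^{h-1}\boldsymbol{N}_{t+1}+c$ with $c$ deterministic, so the second summand equals $\boldsymbol{P}^{h-1}\var(\boldsymbol{N}_{t+1}\mid \boldsymbol{N}_t)(\boldsymbol{P}^{h-1})'=\boldsymbol{P}^{h-1}[\text{diag}(\boldsymbol{V}\boldsymbol{N}_t)+\boldsymbol{\Lambda}](\boldsymbol{P}^{h-1})'$, matching the claim.

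The main obstacle is bookkeeping rather than computation: one must ensure the inductive hypothesis is formulated so that $\var(\boldsymbol{N}_{s+h-1}\mid \boldsymbol{N}_s)$ depends on $\boldsymbol{N}_s$ only through the deterministic function $V_{h-1}$ for every $s$, which is what legitimises substituting $V_{h-1}(\boldsymbol{N}_{t+1})$ inside the outer expectation, and to notice that the deterministic shift $c$ in the conditional mean drops out of the second variance term. Everything else is a routine application of the identity $\var(\boldsymbol{A}\boldsymbol{X}\mid \cdot)=\boldsymbol{A}\var(\boldsymbol{X}\mid \cdot)\boldsymbol{A}'$ together with the base-case formula.
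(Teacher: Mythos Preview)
Your proposal is correct and follows essentially the same route as the paper's proof: a base case via the conditional variance of $\boldsymbol{P}\circ\boldsymbol{N}_t+\boldsymbol{\varepsilon}_{t+1}$, then the law of total variance conditioning on $\boldsymbol{N}_{t+1}$ combined with the previous proposition's conditional mean formula. If anything, your write-up is more explicit than the paper's about the Markov reduction and the vanishing of the deterministic shift $c$, but the argument is the same.
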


\begin{proof}
The one step ahead conditional variance is
$$
\var(\boldsymbol{N}_{t+1}|\boldsymbol{N}_{t})=\var(\boldsymbol{P} \circ\boldsymbol{N}_{t}+\boldsymbol{\varepsilon}_t |\boldsymbol{N}_{t})=\text{diag}(\boldsymbol{V}\boldsymbol{N}_{t}) +\boldsymbol{\Lambda},
$$
where $\boldsymbol{V}$ is the $d\times d$ matrix with entries $p_{i,j}(1-p_{i,j})$. Then, in order to use a recursive argument, we simply have to use the variance decomposition formula, and move one additional step ahead. At time $t+h$, we can write
$$
\var(\boldsymbol{N}_{t+h}|\boldsymbol{N}_{t})=\mathbb{E}[\var(\boldsymbol{N}_{t+h}|\boldsymbol{N}_{t+1})|\boldsymbol{N}_{t}]+\var[\mathbb{E}(\boldsymbol{N}_{t+h}|\boldsymbol{N}_{t+1})|\boldsymbol{N}_{t}],
$$
i.e.
$$
\var(\boldsymbol{N}_{t+h}|\boldsymbol{N}_{t})=\mathbb{E}[\var(\boldsymbol{N}_{t+h}|\boldsymbol{N}_{t+1})|\boldsymbol{N}_{t}]+\var[\boldsymbol{P}^{h-1} \boldsymbol{N}_{t+1}+\left(\mathbb{I}+\boldsymbol{P}+
\cdots +
\boldsymbol{P}^{h-2}\right)\boldsymbol{\lambda}|\boldsymbol{N}_{t}],
$$
$$
\var(\boldsymbol{N}_{t+h}|\boldsymbol{N}_{t})=\mathbb{E}[\var(\boldsymbol{N}_{t+h}|\boldsymbol{N}_{t+1})|\boldsymbol{N}_{t}]+\boldsymbol{P}^{h-1} \var[ \boldsymbol{N}_{t+1}|\boldsymbol{N}_{t}](\boldsymbol{P}^{h-1})'.
$$
{\em Q.E.D.}
\end{proof}

In the case where $\boldsymbol{P}$ is diagonal, we obtain as particular case the expressions of \cite{pedeli09}.

\begin{corollary}
If $\boldsymbol{P}$ is a diagonal matrix, then
$$
\mathbb{E}(N_{i,t+h}|\boldsymbol{N}_{t})=p_{i,i}^h N_{i,t}+[1+p_{i,i}+\cdots+p_{i,i}^{h-1}]\lambda_i =p_{i,i}^h N_{i,t}+\left(\frac{1-p_{i,i}^h}{1-p_{i,i}}\right)\lambda_i 
$$
while
$$
\var(N_{i,t+h}|\boldsymbol{N}_{t})=p_{i,i}^h[1-p_{i,i}^h]N_{i,t}+\left(\frac{1-p_{i,i}^{2h}}{1-p_{i,i}^2}\right)\Lambda_{i,i} +\left(\frac{1-p_{i,i}^{h}}{1-p_{i,i}}-\frac{1-p_{i,i}^{2h}}{1-p_{i,i}^2}\right)\lambda_i 
$$
\end{corollary}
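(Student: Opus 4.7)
The plan is to reduce the diagonal case to a univariate INAR(1) computation for each coordinate, exploiting the fact that when $\boldsymbol{P}$ is diagonal the multivariate thinning $[\boldsymbol{P}\circ\boldsymbol{N}]_i=p_{i,i}\circ N_i$ acts coordinate-wise. The conditional mean formula is essentially a direct substitution into the preceding Proposition: $\boldsymbol{P}^h$ is diagonal with entries $p_{i,i}^h$, and $\mathbb{I}+\boldsymbol{P}+\cdots+\boldsymbol{P}^{h-1}$ is diagonal with entries $\sum_{k=0}^{h-1}p_{i,i}^k=(1-p_{i,i}^h)/(1-p_{i,i})$, which yields the first displayed equation immediately.

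For the variance I would avoid unrolling the matrix recursion $V_h$ and instead use the forecast decomposition obtained by iterating the INAR(1) recursion:
$$
\boldsymbol{N}_{t+h}=\boldsymbol{P}^h\circ\boldsymbol{N}_{t}+\sum_{k=0}^{h-1}\boldsymbol{P}^k\circ\boldsymbol{\varepsilon}_{t+h-k},
$$
which is the same expression already invoked in the proof of Theorem \ref{theorem:gamma} and in the statement of the preceding Proposition. Projecting on coordinate $i$ and using diagonality,
$$
N_{i,t+h}=p_{i,i}^h\circ N_{i,t}+\sum_{k=0}^{h-1}p_{i,i}^k\circ\varepsilon_{i,t+h-k}.
$$
Conditional on $\boldsymbol{N}_t$, the first term and the $h$ summands are mutually independent: the binomial thinnings use fresh independent Bernoullis, and the innovations $\boldsymbol{\varepsilon}_{t+1},\dots,\boldsymbol{\varepsilon}_{t+h}$ are independent of $\boldsymbol{N}_t$. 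Hence the conditional variance splits into a sum of variances.

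Each piece is elementary. Given $N_{i,t}$, the binomial thinning $p_{i,i}^h\circ N_{i,t}$ has variance $p_{i,i}^h(1-p_{i,i}^h)N_{i,t}$. For the innovation terms, the law of total variance gives
$$
\var\!\left(p_{i,i}^k\circ\varepsilon_{i,t+h-k}\right)=\mathbb{E}\!\left[p_{i,i}^k(1-p_{i,i}^k)\varepsilon_{i}\right]+\var\!\left[p_{i,i}^k\varepsilon_{i}\right]=p_{i,i}^k(1-p_{i,i}^k)\lambda_i+p_{i,i}^{2k}\Lambda_{i,i}.
$$
Summing over $k=0,\dots,h-1$ and using the two geometric identities $\sum_{k=0}^{h-1}p_{i,i}^{2k}=(1-p_{i,i}^{2h})/(1-p_{i,i}^2)$ and $\sum_{k=0}^{h-1}p_{i,i}^{k}=(1-p_{i,i}^{h})/(1-p_{i,i})$ yields exactly the claimed closed form.

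The only real obstacle is bookkeeping: making sure the coefficient of $\lambda_i$ assembles as $(1-p_{i,i}^h)/(1-p_{i,i})-(1-p_{i,i}^{2h})/(1-p_{i,i}^2)$, which comes from $\sum p_{i,i}^k-\sum p_{i,i}^{2k}$, and that the $\Lambda_{i,i}$ coefficient is $(1-p_{i,i}^{2h})/(1-p_{i,i}^2)$. As a consistency check one can verify the formula at $h=1$ (reducing to $p_{i,i}(1-p_{i,i})N_{i,t}+\Lambda_{i,i}$) and confirm that both approaches — the direct INMA decomposition and one induction step of the matrix recursion for $V_h$ specialized to diagonal $\boldsymbol{P}$ and diagonal $\boldsymbol{V}$ — give the same result.
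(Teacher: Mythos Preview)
Your argument is correct. The paper does not supply an explicit proof of this corollary: it is stated as ``a particular case'' of the two preceding Propositions, the implicit route being to specialize the matrix recursion $V_h(\boldsymbol{N})=\mathbb{E}[V_{h-1}(\boldsymbol{P}\circ\boldsymbol{N}+\boldsymbol{\varepsilon})\mid\boldsymbol{N}]+\boldsymbol{P}^{h-1}[\text{diag}(\boldsymbol{V}\boldsymbol{N})+\boldsymbol{\Lambda}](\boldsymbol{P}^{h-1})'$ to diagonal $\boldsymbol{P}$ and then solve the resulting scalar recursion. You take a genuinely different and more direct path: instead of unrolling the recursion, you invoke the forecast decomposition $N_{i,t+h}=p_{i,i}^h\circ N_{i,t}+\sum_{k=0}^{h-1}p_{i,i}^k\circ\varepsilon_{i,t+h-k}$ (which in the diagonal case is exact coordinate-wise, not merely in law for the vector), exploit conditional independence of the summands given $\boldsymbol{N}_t$, and evaluate each variance by the binomial-thinning formula plus one application of total variance. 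This buys you a closed form in one shot, without induction, and makes the separate geometric sums for the $\Lambda_{i,i}$ and $\lambda_i$ coefficients transparent; the paper's recursive route would require guessing the closed form and verifying it, or solving a first-order linear recursion. Both approaches are valid and yield the same answer.
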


\begin{remark}
If $(\boldsymbol{N}_t)$ is a stationary MINAR(1) process, then, as $h\to\infty$,  $\mathbb{E}(\boldsymbol{N}_{t+h}|\boldsymbol{N}_{t})$ converges to $\sum_{k\ge 0}\boldsymbol{P}^k\boldsymbol{\mu}=[\mathbb{I}-\boldsymbol{P}]^{-1}\boldsymbol{\mu}$.
\end{remark}

\section{Nondiagonal thinning matrices and Granger causality}\label{section:Granger}

Based on the concepts introduced in \cite{franke}, it is possible to get interpretations of parameters. 
Based on Granger terminology, $N_2$ causes $N_1$ at time $t$  if and only if 
\begin{equation*}
\mathbb{E}\left( N_{1,t}|\underline{N}_{1,t-1},\underline{N}_{2,t-1}\right) \neq 
\mathbb{E}\left( N_{1,t}|\underline{N}_{1,t-1}\right),
\end{equation*}
where $\underline{N}_{1,t-1}=(N_{1,0},\cdots,N_{1,t-1})$ and $\underline{N}_{2,t-1}=(N_{2,0},\cdots,N_{2,t-1})$.

Further, $N_2$ causes {\em instantaneously} $N_1$ at time $t$ if
\begin{equation*}
\mathbb{E}\left( N_{1,t}|\underline{N}_{1,t-1},\underline{N}_{2,t-1},N_{2,t}\right) \neq 
\mathbb{E}\left( N_{1,t}|\underline{N}_{1,t-1},\underline{N}_{2,t-1}\right) .
\end{equation*}
Thus, as for Gaussian VAR processes, the following interpretation holds (see Section 3.6. in \cite{lup})
\begin{lemma}
Let $\boldsymbol{N}_t=( N_{1,t}, N_{2,t})$ be a bivariate INAR(1) with representation $(\ref{eq:INARD-d})$
\begin{enumerate}
\item $(N_{1,t})$ and $(N_{2,t})$ are instantaneously related if $\boldsymbol{\varepsilon}$ is  a noncorrelated noise, 
 \item $(N_{1,t})$ and $(N_{2,t})$ are independent, which we denote $(N_{1,t}) \perp (N_{2,t})$, if $\boldsymbol{P}$ is diagonal,
i.e. $p_{1,2}=p_{2,1}=0$, and $\varepsilon_{1,t}$ and $\varepsilon_{2,t}$ are independent,
 \item $(N_{1,t})$ causes $(N_{2,t})$ but $(N_{2,t})$ does not cause $(N_{1,t})$, which we denote $(N_{1,t}) \rightarrow
 (N_{2,t})$, if $\boldsymbol{P}$ is a lower triangle matrix,
i.e. $p_{2,1}=0$ while $p_{1,2}\neq0$,
 \item $(N_{2,t})$ causes $(N_{1,t})$ but $(N_{1,t})$ does not cause $(N_{2,t})$, which we denote $(N_{1,t}) \leftarrow
 (N_{2,t})$, if $\boldsymbol{P}$ is a lower triangle matrix,
i.e. $p_{1,2}=0$ while $p_{2,1}\neq0$,
 \item $(N_{1,t})$ causes $(N_{2,t})$ and conversely, i.e. a feedback effect, which we denote $(N_{1,t}) \leftrightarrow
 (N_{2,t})$, if $\boldsymbol{P}$ is a full matrix,
i.e. $p_{1,2},p_{2,1}\neq0$
\end{enumerate}
\end{lemma}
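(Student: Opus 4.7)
The plan is to compute the conditional expectations that appear in Granger's definitions and then read off each case directly. The two ingredients that do most of the work are already in the paper. Remark \ref{remark:chain-markov} tells us that $(\boldsymbol{N}_t)$ is Markov, so
\[
\mathbb{E}(N_{i,t}\mid\underline{N}_{1,t-1},\underline{N}_{2,t-1})=\mathbb{E}(N_{i,t}\mid\boldsymbol{N}_{t-1}).
\]
Combined with the identity $\mathbb{E}(\boldsymbol{P}\circ\boldsymbol{N}\mid\boldsymbol{N})=\boldsymbol{P}\boldsymbol{N}$ of Lemma~1 of Franke et al.\ quoted earlier, and the mutual independence of the Bernoulli thinnings and the innovations, this immediately yields
\[
\mathbb{E}(N_{i,t}\mid\boldsymbol{N}_{t-1})=p_{i,1}N_{1,t-1}+p_{i,2}N_{2,t-1}+\lambda_i,\qquad i=1,2.
\]

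From this display, items (3)--(5) follow by inspecting the zero pattern of the off-diagonal entries of $\boldsymbol{P}$. If the coefficient coupling $N_{2,t-1}$ into the equation for $N_{1,t}$ vanishes, then $\mathbb{E}(N_{1,t}\mid\boldsymbol{N}_{t-1})$ is a function of $N_{1,t-1}$ only, which by iterated conditioning collapses $\mathbb{E}(N_{1,t}\mid\underline{N}_{1,t-1},\underline{N}_{2,t-1})$ onto $\mathbb{E}(N_{1,t}\mid\underline{N}_{1,t-1})$ and the corresponding non-causation holds. Conversely, a nonzero off-diagonal entry produces an affine term in $N_{2,t-1}$ whose slope cannot be absorbed into a function of $\underline{N}_{1,t-1}$ alone. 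Enumerating the four zero/nonzero patterns of the two off-diagonal coefficients yields items (3), (4) and (5); for item (2), the added hypothesis that $\varepsilon_{1,t}$ and $\varepsilon_{2,t}$ are independent turns the two decoupled univariate recursions $N_{i,t}=p_{i,i}\circ N_{i,t-1}+\varepsilon_{i,t}$ into representations of $(N_{1,t})$ and $(N_{2,t})$ as measurable functions of disjoint families of independent random variables, hence into fully independent processes.

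For item (1), I would condition on $(\boldsymbol{N}_{t-1},N_{2,t})$ and compare with $\mathbb{E}(N_{1,t}\mid\boldsymbol{N}_{t-1})$. Given $\boldsymbol{N}_{t-1}$, the only new randomness in $N_{2,t}=p_{2,1}\circ N_{1,t-1}+p_{2,2}\circ N_{2,t-1}+\varepsilon_{2,t}$ comes from the Bernoulli thinnings in the second row and from $\varepsilon_{2,t}$. The multivariate thinning operator uses independent Bernoulli variates in distinct components, so these new thinnings carry no information about $N_{1,t}$; the only remaining channel from $N_{2,t}$ back to $N_{1,t}$ is a possible covariance between $\varepsilon_{1,t}$ and $\varepsilon_{2,t}$. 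Instantaneous causality therefore holds precisely when the innovation components are correlated, giving (1).

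The main obstacle is the only-if direction in (2)--(5): one must exclude degenerate cancellations in which a genuinely nonzero coefficient $p_{i,j}$ nevertheless yields equal conditional expectations. The stationarity hypotheses of the previous proposition---$\mathbb{P}(\boldsymbol{\varepsilon}_t=\boldsymbol{0})\in(0,1)$ together with the irreducibility of the chain---rule this out by guaranteeing that $N_{2,t-1}$ is not almost surely a function of $\underline{N}_{1,t-1}$, so the affine term cannot disappear under iterated conditioning. A secondary subtlety concerns item (2), where uncorrelatedness of $(\varepsilon_{1,t},\varepsilon_{2,t})$ alone is not sufficient for process-level independence of the two coordinate series; full distributional independence of the innovation components is required, which is exactly the stronger hypothesis stated in the lemma.
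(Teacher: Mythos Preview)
The paper does not actually prove this lemma: it is stated without proof, prefaced only by the remark that ``as for Gaussian VAR processes, the following interpretation holds (see Section~3.6 in \cite{lup}).'' Your proposal therefore goes well beyond what the paper supplies, and the argument you sketch---reducing to one-step conditional expectations via the Markov property, reading off the off-diagonal entries of $\boldsymbol{P}$, and isolating the instantaneous channel through the innovation covariance---is the natural route and is essentially correct. Your discussion of the only-if direction (ruling out that a nonzero $p_{i,j}N_{j,t-1}$ term could be absorbed into a function of $\underline{N}_{i,t-1}$) and of the need for full independence rather than mere uncorrelatedness in item~(2) is more careful than anything the paper offers.

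One small point worth tightening: in the non-causation direction (e.g.\ $p_{1,2}=0$ implies $N_2$ does not cause $N_1$), you need not only that $\mathbb{E}(N_{1,t}\mid\boldsymbol{N}_{t-1})$ is a function of $N_{1,t-1}$ alone, but also that $\mathbb{E}(N_{1,t}\mid\underline{N}_{1,t-1})$ equals that same function. This follows because when $p_{1,2}=0$ the recursion $N_{1,t}=p_{1,1}\circ N_{1,t-1}+\varepsilon_{1,t}$ makes $(N_{1,t})$ a Markov chain in its own filtration, but it is worth saying explicitly rather than leaving it to ``iterated conditioning.''
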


From those characterizations, and Proposition \ref{th:converge-CML}, it is possible to derive a simple testing procedure, based on a likelihood ratio test. 
For instantaneous causality, we test
$$
H_0:\boldsymbol{\lambda}_1=\boldsymbol{\lambda}_1^\perp \text{ against }H_1:\boldsymbol{\lambda}_1\neq \boldsymbol{\lambda}_1^\perp,
$$
where $\boldsymbol{\lambda}_1=\boldsymbol{\lambda}_1^\perp$ if and only if margins of the innovation process are independent.

\begin{corrolary}
Let $\widehat{\boldsymbol{\lambda}}$ denote the conditional maximum likelihood estimate of ${\boldsymbol{\lambda}}=({\boldsymbol{\lambda}}_0,{\boldsymbol{\lambda}}_1)$ in the non-constrained MINAR(1) model, and $\widehat{\boldsymbol{\lambda}}^\perp$ denote the conditional maximum likelihood estimate of ${\boldsymbol{\lambda}}^\perp=({\boldsymbol{\lambda}}_0,{\boldsymbol{\lambda}}_1^\perp)$ in the constrained model (when innovation has independent margins), then under suitable conditions,
$$
2[\log \mathcal{L}(\underline{\boldsymbol{N}},\widehat{\boldsymbol{\lambda}}|\boldsymbol{N}_0)-
\log \mathcal{L}(\underline{\boldsymbol{N}},\widehat{\boldsymbol{\lambda}}^\perp|\boldsymbol{N}_0)]
\overset{\mathcal{L}}{\rightarrow}
\chi^2(\text{dim}(\boldsymbol{\lambda})-\text{dim}(\boldsymbol{\lambda}^\perp)),\text{ as }n\rightarrow\infty, \text{ under }H_0.
$$
\end{corrolary}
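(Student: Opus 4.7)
The plan is to recognize this as a standard Wilks-type likelihood ratio result, where the heavy lifting has already been done in Proposition \ref{th:converge-CML}. Under $H_0$, the null parameter set $\{\boldsymbol{\theta}=(\boldsymbol{P},\boldsymbol{\lambda}_0,\boldsymbol{\lambda}_1^\perp)\}$ is a smooth submanifold of the full parameter space with codimension $\text{dim}(\boldsymbol{\lambda})-\text{dim}(\boldsymbol{\lambda}^\perp)$, obtained by fixing the dependence component $\boldsymbol{\lambda}_1$ at its ``independence'' value. Since the MINAR(1) process is a stationary, irreducible, aperiodic Markov chain, the same technical assumptions (C1--C6) invoked in Proposition \ref{th:converge-CML} give us a consistent, asymptotically normal CML estimator both in the full model and in the restricted submodel, together with a law of large numbers for the observed information.

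First I would apply Proposition \ref{th:converge-CML} to the restricted model (re-parametrized over a lower-dimensional open set) to obtain consistency and asymptotic normality of $\widehat{\boldsymbol{\lambda}}^\perp$ under $H_0$. Next I would perform a second-order Taylor expansion of the conditional log-likelihood $\log\mathcal{L}(\underline{\boldsymbol{N}},\cdot|\boldsymbol{N}_0)$ around the true value $\boldsymbol{\lambda}=(\boldsymbol{\lambda}_0,\boldsymbol{\lambda}_1^\perp)$, evaluated at both $\widehat{\boldsymbol{\lambda}}$ and $\widehat{\boldsymbol{\lambda}}^\perp$; the first-order terms can be expressed via the score function, which under the Markov chain CLT (Theorem 2.2 in \cite{billingsley}) converges to a Gaussian vector with covariance equal to the Fisher information $\Sigma(\boldsymbol{\theta})$. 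After subtracting the two expansions, the linear terms cancel at the respective maxima and the test statistic reduces asymptotically to a quadratic form
\[
2[\log \mathcal{L}(\underline{\boldsymbol{N}},\widehat{\boldsymbol{\lambda}}|\boldsymbol{N}_0)-\log \mathcal{L}(\underline{\boldsymbol{N}},\widehat{\boldsymbol{\lambda}}^\perp|\boldsymbol{N}_0)]=\sqrt{n}(\widehat{\boldsymbol{\lambda}}-\widehat{\boldsymbol{\lambda}}^\perp)^{\prime}\,\Sigma(\boldsymbol{\theta})\,\sqrt{n}(\widehat{\boldsymbol{\lambda}}-\widehat{\boldsymbol{\lambda}}^\perp)+o_{\Proba}(1).
\]
A standard projection argument then shows that, under $H_0$, this quadratic form is asymptotically distributed as a $\chi^2$ whose degrees of freedom equal the codimension $\text{dim}(\boldsymbol{\lambda})-\text{dim}(\boldsymbol{\lambda}^\perp)$.

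The main obstacle I anticipate is not the algebra of the Taylor expansion, which is routine once asymptotic normality is in hand, but the verification that the constrained model still satisfies the Franke--Rao regularity conditions C1--C6: in particular, that the restricted parameter $(\boldsymbol{P},\boldsymbol{\lambda}_0,\boldsymbol{\lambda}_1^\perp)$ lies in the interior of a well-defined open set, that the transition density given in Remark \ref{remark:chain-markov} remains twice continuously differentiable with integrable dominating functions, and that the Fisher information block corresponding to $\boldsymbol{\lambda}_1$ is nonsingular (so the constraint is non-degenerate). Once these regularity issues are checked, the conclusion follows exactly as in the classical Wilks theorem for dependent observations.
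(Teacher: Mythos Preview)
Your proposal is correct and matches the paper's implicit approach: the paper gives no explicit proof for this corollary, treating it as an immediate consequence of Proposition~\ref{th:converge-CML} (itself based on Theorem~2.2 of \cite{billingsley}) via the standard Wilks likelihood-ratio argument you have outlined. Your sketch simply makes explicit the Taylor-expansion/projection reasoning that the paper takes for granted, and your caveat about verifying the Franke--Rao regularity conditions C1--C6 on the constrained submodel is exactly the ``suitable conditions'' qualifier in the statement.
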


For lagged causality, we test
$$
H_0:\boldsymbol{P}\in\mathcal{P} \text{ against }H_1:\boldsymbol{P}\notin\mathcal{P} ,
$$
where $\mathcal{P}$ is a set of constrained shaped matrix, e.g. $\mathcal{P}$ is the set of $d\times d$ diagonal matrices for lagged independence, or a set of block triangular matrices for lagged causality.
 
\begin{corrolary}
Let $\widehat{\boldsymbol{P}}$ denote the conditional maximum likelihood estimate of $\boldsymbol{P}$ in the non-constrained MINAR(1) model, and $\widehat{\boldsymbol{P}}^c$ denote the conditional maximum likelihood estimate of ${\boldsymbol{P}}$ in the constrained model, then under suitable conditions,
$$
2[\log \mathcal{L}(\underline{\boldsymbol{N}},\widehat{\boldsymbol{P}}|\boldsymbol{N}_0)-
\log \mathcal{L}(\underline{\boldsymbol{N}},\widehat{\boldsymbol{P}}^c |\boldsymbol{N}_0)]
\overset{\mathcal{L}}{\rightarrow}
\chi^2(d^2-\text{dim}(\mathcal{P})),\text{ as }n\rightarrow\infty, \text{ under }H_0.
$$
\end{corrolary}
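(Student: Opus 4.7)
The plan is to recognize this as a direct application of the standard Wilks-type likelihood ratio argument, building on the asymptotic normality already established in Proposition \ref{th:converge-CML}. The essential ingredients are all in place: the log-likelihood $\log \mathcal{L}(\underline{\boldsymbol{N}},\boldsymbol{\theta}|\boldsymbol{N}_0)$ is a sum along a Markov chain, so Theorem 2.2 of \cite{billingsley} supplies the necessary asymptotic framework, and the C1--C6 conditions carried over from \cite{franke} give differentiability, identifiability, and nonsingularity of the Fisher information $\Sigma(\boldsymbol{\theta})$. The novelty here is simply that we restrict the $\boldsymbol{P}$-part of the parameter to a lower-dimensional smooth set $\mathcal{P}$, so the LR statistic compares two nested estimators whose asymptotic Gaussian distributions live in linear subspaces of dimensions differing by $d^2 - \dim(\mathcal{P})$.

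Concretely, I would first check that for each constraint type in the statement (diagonal, block triangular, etc.), the set $\mathcal{P}$ is a linear subspace of the space of $d\times d$ matrices, obtained by forcing a prescribed collection of entries to be zero. This immediately gives $\mathcal{P}\times\ell$ as a smooth submanifold of the full parameter space $(0,1)^{d^2}\times\ell$, with codimension exactly $d^2-\dim(\mathcal{P})$ since the innovation parameters $\boldsymbol{\lambda}$ are unconstrained under $H_0$. Next, writing $\widehat{\boldsymbol{\theta}}=(\widehat{\boldsymbol{P}},\widehat{\boldsymbol{\lambda}})$ and $\widehat{\boldsymbol{\theta}}^c=(\widehat{\boldsymbol{P}}^c,\widehat{\boldsymbol{\lambda}}^c)$, I would do a second-order Taylor expansion of $\log\mathcal{L}$ around the true $\boldsymbol{\theta}_0\in\mathcal{P}\times\ell$. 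The quadratic parts give
$$
2[\log \mathcal{L}(\underline{\boldsymbol{N}},\widehat{\boldsymbol{\theta}}|\boldsymbol{N}_0)-
\log \mathcal{L}(\underline{\boldsymbol{N}},\widehat{\boldsymbol{\theta}}^c|\boldsymbol{N}_0)]
= n(\widehat{\boldsymbol{\theta}}-\boldsymbol{\theta}_0)^\top\Sigma(\boldsymbol{\theta}_0)(\widehat{\boldsymbol{\theta}}-\boldsymbol{\theta}_0)
- n(\widehat{\boldsymbol{\theta}}^c-\boldsymbol{\theta}_0)^\top\Sigma(\boldsymbol{\theta}_0)(\widehat{\boldsymbol{\theta}}^c-\boldsymbol{\theta}_0)+o_\Proba(1),
$$
and the standard projection argument identifies this difference, after reparameterising along $\Sigma(\boldsymbol{\theta}_0)^{1/2}$, with the squared norm of the orthogonal projection of a $\mathcal{N}(\boldsymbol{0},\mathbb{I})$ vector onto the orthogonal complement of the tangent space of $\mathcal{P}\times\ell$. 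Since that orthogonal complement has dimension $d^2-\dim(\mathcal{P})$, the limit is exactly $\chi^2(d^2-\dim(\mathcal{P}))$.

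The main obstacle I anticipate is boundary behavior. Proposition \ref{th:converge-CML} is stated on the open set $(0,1)^{d^2}$, but the constrained matrices in $\mathcal{P}$ force certain entries of $\boldsymbol{P}$ to equal $0$, which lies on the boundary of that open set. One has to be careful that the constrained MLE $\widehat{\boldsymbol{P}}^c$ still converges to an interior point of $\mathcal{P}$ (viewed as its own parameter space, where the free entries are in $(0,1)$) and that the asymptotic normality of $\widehat{\boldsymbol{\theta}}$ in the unconstrained model continues to hold when the true $\boldsymbol{P}_0$ has zero entries. The cleanest remedy is to state the result under the mild extension that $\boldsymbol{P}\in[0,1)^{d^2}$ is allowed, verify that \cite{franke}'s conditions C1--C6 remain valid on this extended domain (the only delicate point being nonsingularity of the information matrix at such boundary points, which holds because $\mathcal{P}$ is a linear subspace and the score contributions along its directions remain linearly independent), and then invoke Wilks in the usual form. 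Aside from this boundary bookkeeping, the proof is essentially notational once Proposition \ref{th:converge-CML} is granted.
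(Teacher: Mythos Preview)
Your proposal is correct and in fact goes well beyond what the paper does: the corollary is stated in the paper \emph{without} proof, as an immediate consequence of Proposition~\ref{th:converge-CML} (itself resting on Theorem~2.2 of \cite{billingsley} and the regularity conditions of \cite{franke}). The paper relies on the reader to recognize the standard Wilks argument, which is precisely the Taylor expansion and projection reasoning you have written out.

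Your boundary observation is not only legitimate but is explicitly acknowledged elsewhere in the paper: in the Monte Carlo study the authors note that Proposition~\ref{th:converge-CML} ``insured only convergence in the interior of the support of parameters, i.e.\ $(0,1)^4\times(0,\infty)^3$, not on borders,'' and they address the case $p_{1,2}=p_{2,1}=0$ by simulation rather than by a rigorous extension. So your proposed remedy (reparameterise along the free coordinates of $\mathcal{P}$ and check that the information matrix remains nonsingular in those directions) is more careful than what the paper itself offers; the paper simply sweeps this under the phrase ``under suitable conditions.''
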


\section{Bivariate INAR(1) process with Poisson innovation}\label{section:binar-poisson}

MINAR(d) might appear as tractable models, but the number of parameters can be extremely large. In dimension $d$, the dynamics is characterized by $d^2+\text{dim}(\boldsymbol{\Lambda})$ parameters. The standard model for the innovation process would be the multivariate common shock Poisson random vector (see \cite{Maha} or Section 37.2 in \cite{JK}). Let $(u_{I},I\subset \{1,\cdots,d\})$ be a collection of independent Poisson random variables, and define
$$
\varepsilon_i:=\sum_{I\subset \{1,\cdots,d\},i\in I} u_I
$$
then $\boldsymbol{\varepsilon}=(\varepsilon_{1},\cdots,\varepsilon_{d})$ has a multivariate Poisson distribution. In that case, $\text{dim}(\boldsymbol{\Lambda})=2^d$. In moderate dimension (e.g. $d=10$)  $\text{dim}(\boldsymbol{\Lambda})$ is larger than one thousand, which will not be tractable. Thus, for convenience, let us focus on the bivariate INAR(1) process.

\subsection{The bivariate Poisson innovation process}

A classical distribution for $\boldsymbol{\varepsilon}_t$ is the bivariate Poisson distribution, with one common shock, i.e.
$$
\left\{
\begin{matrix}
\varepsilon_{1,t} = M_{1,t}+M_{0,t}
\\
\varepsilon_{2,t} = M_{2,t}+M_{0,t}
\end{matrix}
\right.
$$
where $M_{1,t}$, $M_{2,t}$ and $M_{0,t}$ are independent Poisson variates, with parameters $\lambda_1-\varphi$, $\lambda_2-\varphi$ and $\varphi$, respectively.  
In that case, $\boldsymbol{\varepsilon}_t=(\varepsilon_{1,t},\varepsilon_{2,t})$ has joint probability function
$$
\mathbb{P}[(\varepsilon_{1,t},\varepsilon_{2,t})=(k_1,k_2)]=
e^{-[\lambda_1+\lambda_2-\varphi]}
\frac{(\lambda_1-\varphi)^{k_1}}{k_1!}
\frac{(\lambda_2-\varphi)^{k_2}}{k_2!}
\sum_{i=0}^{\min\{k_1,k_2\}}\binom{k_1}{i}\binom{k_2}{i}
i!\left(\frac{\varphi}{[\lambda_1-\varphi][\lambda_2-\varphi]}\right)
$$
with $\lambda_1,\lambda_2>0$, $\varphi\in[0, \min\{\lambda_1,\lambda_2\}]$. See e.g. 
\cite{kocherlakota}  for a comprehensive description of that joint distribution.
Note that $\varepsilon_{1,t}$ and $\varepsilon_{2,t}$ are both Poisson distributed, with parameter
$\lambda_1$ and $\lambda_2$ respectively, and here $\text{cov}(\varepsilon_{1,t},\varepsilon_{2,t})=\varphi$. Hence, parameter $\varphi$ characterizes independence (or non-independence) of the innovation process. Hence
$$
\boldsymbol{\lambda}=
\begin{pmatrix}
\lambda_1 \\ \lambda_2
\end{pmatrix} \text{ and }
\boldsymbol{\Lambda}=
\begin{pmatrix}
\lambda_1 & \varphi \\ \varphi & \lambda_2
\end{pmatrix} 
$$
and most of the previous expressions can be derived explicitly.

\subsection{BINAR(1) process with Poisson innovation}

For univariate INAR(1) processes, if $N_0$ is assumed to have a Poisson distribution with mean $\lambda/(1-p)$, then $N_t$ is also Poisson distributed, for all $t\geq 0$. But this result does not hold in higher dimensions (\cite{karlis} noticed that result with diagonal $\boldsymbol{P}$ matrices, and it is still true). Nevertheless, it is still possible to derive joint moments of the joint distributions ($\boldsymbol{\mu}$ and $\boldsymbol{\gamma}(0)$) as well as autocorrelations.

\begin{example}
$\boldsymbol{\mu}:=\mathbb{E}(\boldsymbol{N}_{t})$  is given by
$$
\left\{
\begin{matrix}
\mathbb{E}(N_{1,t})=\mu_1=\displaystyle{\frac{(1-p_{2,2})\lambda_1+p_{1,2}\lambda_2}{(1-p_{1,1})(1-p_{2,2})-p_{2,1}p_{1,2}}}\\
\mathbb{E}(N_{2,t})=\mu_2=\displaystyle{\frac{(1-p_{1,1})\lambda_2+p_{2,1}\lambda_1}{(1-p_{1,1})(1-p_{2,2})-p_{2,1}p_{1,2}}}\\
\end{matrix}
\right.
$$ 
\end{example}

Expressions for $\boldsymbol{\gamma}(0)$ and $\boldsymbol{\gamma}(1)$ can be explicitly derived, but from Theorem \ref{theorem:gamma} we do have matrices based expression that can be used numerically.

\begin{example}
Auto and cross autocorrelations are given by
$$
\text{corr}(N_{1,t},N_{2,t})=\frac{\gamma_{1,2}(0)}{\sqrt{\gamma_{1,1}(0)\gamma_{2,2}(0)}}, 
$$
$$
\text{corr}(N_{1,t},N_{1,{t-1}})=\frac{\gamma_{1,1}(1)}{\gamma_{1,1}(0)} \text{ and }\text{corr}(N_{2,t},N_{2,{t-1}})=\frac{\gamma_{2,2}(1)}{\gamma_{2,2}(0)},
$$
$$
\text{corr}(N_{1,t},N_{2,t-1})=\frac{\gamma_{1,2}(1)}{\sqrt{\gamma_{1,1}(0)\gamma_{2,2}(0)}}.
$$ 
\end{example}

Note that Poisson innovation satisfy technical assumption needed in Proposition \ref{th:converge-CML} to insure that
conditional maximum likelihood estimates converge to a normal distribution as $n$ goes to infinity.

\subsection{Maximum likelihood estimation for BINAR(1) with Poisson innovation}

From Equation \ref{eq:likelihood} the conditional likelihood of $(\boldsymbol{P},\boldsymbol{\lambda},\varphi)$ given a sample $\underline{N}=((N_{1,t},N_{2,t}),t=1,2,\cdots,n)$ is
\[
\mathcal{L}((\boldsymbol{P},\boldsymbol{\lambda},\varphi);\underline{N})=\prod_{t=1}^n
\sum_{k_1=\underline{k}_1}^{N_{1,t}}\sum_{k_2=\underline{k}_2}^{N_{2,t}}
\pi((N_{1,t}-k_{1},N_{2,t}-k_{2}),\boldsymbol{N}_{t-1}) \cdot \underbrace{\mathbb{P}[(\varepsilon_{1,t},\varepsilon_{2,t})=(k_1,k_2)]}_{\text{bivariate Poisson}}
\]
with $\underline{k}_1=\max\{N_{1,t}-N_{1,t-1}-N_{2,t-1},0\}$ and $\underline{k}_2=\max\{N_{2,t}-N_{1,t-1}-N_{2,t-1},0\}$. Here
\[
\pi((n_1,n_2),\boldsymbol{N}_{t-1})=\pi_1(n_1,\boldsymbol{N}_{t-1})\cdot \pi_2(n_2,\boldsymbol{N}_{t-1})
\]
since given with $(\varepsilon_t,\boldsymbol{N}_{t-1})$, components of $\boldsymbol{N}_{t}$ are assumed to be independent (from the definition of the multivariate thinning operator $\circ$), where $\pi_1(\cdot,\boldsymbol{N}_{t-1})$ and $\pi_2(\cdot,\boldsymbol{N}_{t-1})$ are convolutions of binomial distributions, i.e. for $n_1,n_2=0,1,\cdots,N_{1,t-1}+N_{2,t-1}$
\[
\pi_1(n_1,\boldsymbol{N}_{t-1})=\sum_{m=0}^{N_{1,t-1}} \binom{N_{1,t-1}}{m} p_{1,1}^m (1-p_{1,1})^{N_{1,t-1}-m} \binom{N_{2,t-1}}{n_1-m} p_{1,2}^{n_1-m} (1-p_{1,2})^{N_{2,t-1}-(n_1-m)} 
\]
\[
\pi_2(n_2,\boldsymbol{N}_{t-1})=\sum_{m=0}^{N_{1,t-1}} \binom{N_{1,t-1}}{m} p_{2,1}^m (1-p_{2,1})^{N_{1,t-1}-m} \binom{N_{2,t-1}}{n_2-m} p_{2,2}^{n_2-m} (1-p_{2,2})^{N_{2,t-1}-(n_2-m)} 
\]

Using numerical optimization routines, it is possible to compute $(\widehat{\boldsymbol{P}},\widehat{\boldsymbol{\lambda}},\widehat{\varphi})=\text{argmax}\{\mathcal{L}((\boldsymbol{P},\boldsymbol{\lambda},\varphi);\underline{N})\}$, and Proposition \ref{th:converge-CML} insures convergence of that estimator: the conditional maximum
likelihood estimates (CMLE)\ are asymptotically normal and unbiased.

\subsection{Monte Carlo study}

Based on the previous expression of the likelihood, it is possible to run Monte Carlo simulations to study the behavior of the estimators on simulated series.
This
numerical example illustrates with two sets of hypothetical parameters how
fast is convergence. The two sets of parameters are:

\begin{itemize}
\item $p_{1,1}=0.25,$ $p_{1,2}=0.05,$ $p_{2,1}=0.1$, $p_{2,2}=0.4$ with $%
\lambda _{1}=5$, $\lambda _{2}=3$ and $\varphi =1;$

\item $p_{1,1}=0.25,$ $p_{1,2}=p_{2,1}=0$, $p_{2,2}=0.4$ with $\lambda
_{1}=5 $, $\lambda _{2}=3$ and $\varphi =1.$
\end{itemize}

The second set of parameters is a special case of the proposed BINAR, which
is the diagonal BINAR\ model of \cite{pedeli09} and \cite{karlis}. This will
illustrate that in some instances such as $p_{1,2}=p_{2,1}=0$, the CMLE\ of
the multivariate INAR\ still converges to true values (even if Proposition \ref{th:converge-CML} insured only convergence in the interior of the support of parameters,  i.e. $(0,1)^4\times (0,\infty)^3$, not on borders).

To perform this experiment, 250 samples of different sizes have been
generated. Sample sizes of 25, 50, 100, 250, 500, 1000, 5000 and 10000
observations are considered. 


Figure \ref{Fig:mc1} shows the kernel-smoothed\footnote{%
Although, kernel smoothed densities show some curves in a negative domain,
none of estimated parameters was negative in the samples. Thus, the negative
domain is only due to smoothing.} density function of the distribution of
each parameter in the first set, over the various sample sizes. Tables \ref{Table1a} and \ref{Table1b} show the
mean and standard deviation of the parameter values over different sample
sizes. One can see that in the first set of parameters, the estimates
converge quickly to a normal distribution and the bias goes steadily to 0.
In the second set of parameters, the results are shown in Figure \ref{Fig:mc2} and Tables \ref{Table2a} and \ref{Table2b} . One sees that
even though $p_{1,2}=p_{2,1}=0$, the distribution rapidly concentrates at 0.
This indicates that the approach is valid.

\begin{figure}[ht]
\begin{center}
\includegraphics[width=0.99\textwidth]{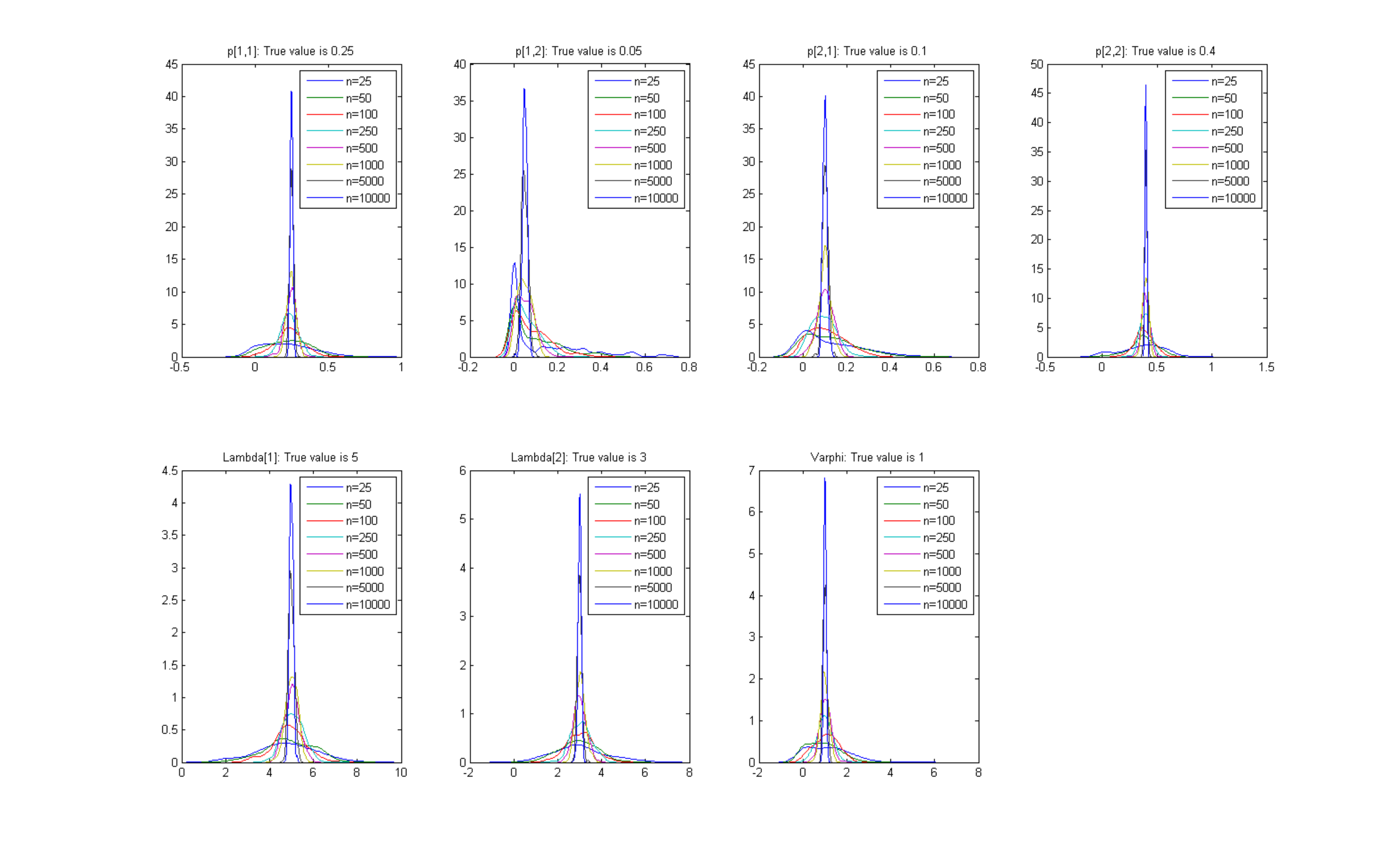}
\end{center}
\caption{Distribution of estimators $\widehat{p}_{1,1}$, $\widehat{p}_{1,2}$, $\widehat{p}_{2,1}$, $\widehat{p}_{2,2}$, $\widehat{\lambda}_{1}$, $\widehat{\lambda}_{2}$ and $\widehat{\varphi}$, as a function of the sample size $n$, case of {\em non}-diagonal $\boldsymbol{P}$ matrix.}
\label{Fig:mc1}
\end{figure}

\begin{figure}[ht]
\begin{center}
\includegraphics[width=0.99\textwidth]{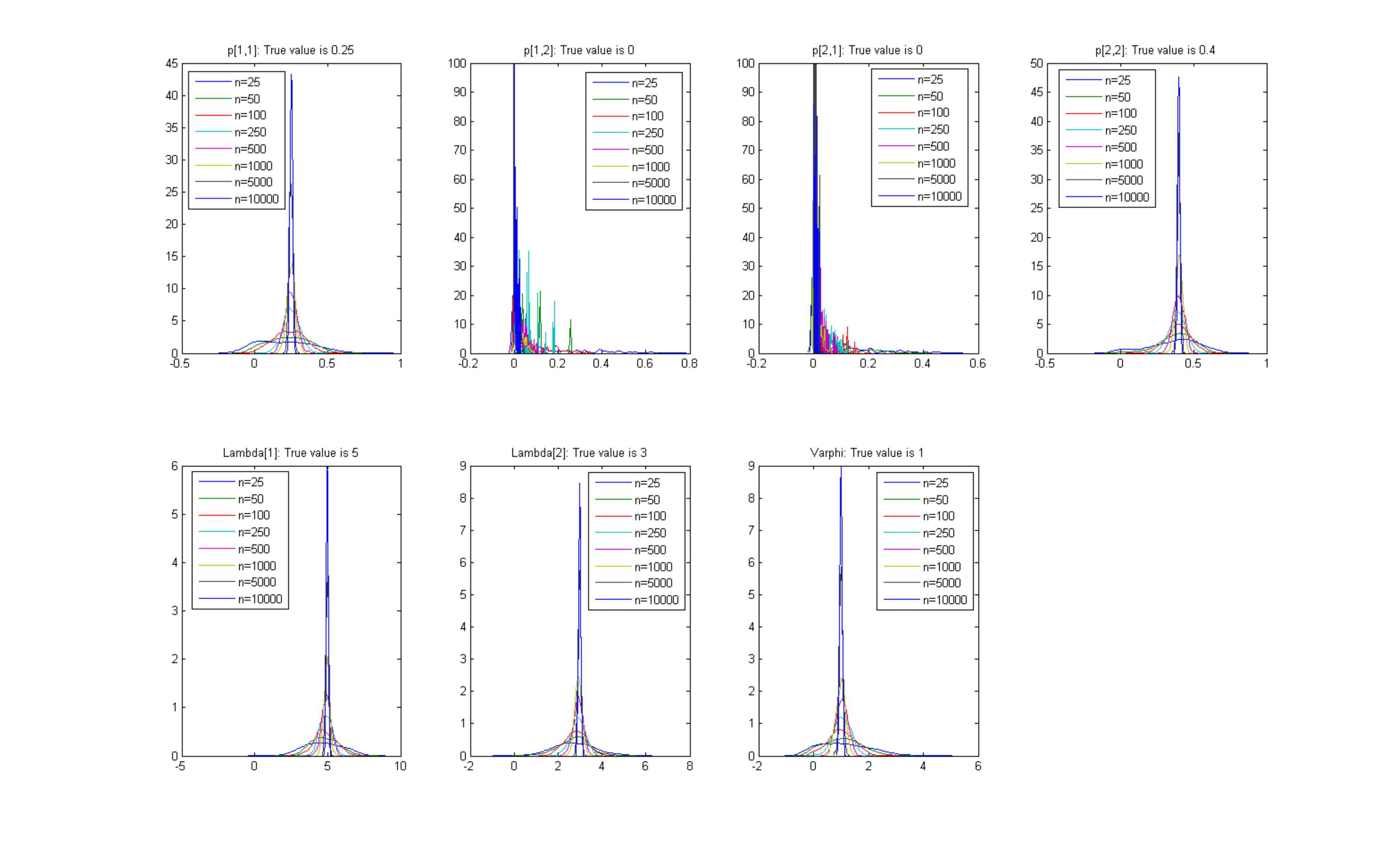}
\end{center}
\caption{Distribution of estimators $\widehat{p}_{1,1}$, $\widehat{p}_{1,2}$, $\widehat{p}_{2,1}$, $\widehat{p}_{2,2}$, $\widehat{\lambda}_{1}$, $\widehat{\lambda}_{2}$ and $\widehat{\varphi}$, as a function of the sample size $n$, case of diagonal $\boldsymbol{P}$ matrix.}
\label{Fig:mc2}
\end{figure}

\begin{table}{ 
\begin{tabular}{|r|rrrrrrr|}
\hline
Sample size  $n$ &     $\widehat{p}_{1,1}$ &     $\widehat{p}_{1,2}$ &     $\widehat{p}_{2,1}$&     $\widehat{p}_{2,2}$ &  $\widehat{\lambda}_{1}$ &  $\widehat{\lambda}_{2}$&    $\widehat\varphi$ \\
\hline
        25 &    21.27\% &    12.15\% &    14.54\% &    34.24\% &     4.8023 &     2.9854 &     1.1057 \\

        50 &    23.18\% &     9.97\% &    11.73\% &    38.62\% &     4.8538 &     2.9850 &     0.9704 \\

       100 &    23.29\% &     6.63\% &    10.11\% &    39.93\% &     5.0075 &     3.0081 &     1.0034 \\

       250 &    23.95\% &     5.52\% &    10.47\% &    39.56\% &     5.0531 &     2.9801 &     0.9774 \\

       500 &    24.57\% &     6.19\% &    10.18\% &    39.49\% &     4.9704 &     3.0318 &     1.0162 \\

      1000 &    24.93\% &     5.02\% &    10.09\% &    39.84\% &     5.0044 &     3.0040 &     0.9843 \\

      5000 &    24.88\% &     4.96\% &     9.95\% &    39.92\% &     5.0097 &     3.0086 &     0.9952 \\

     10000 &    25.06\% &     4.98\% &    10.08\% &    39.94\% &     4.9969 &     2.9972 &     1.0036 \\\hline

True value &       25\% &        5\% &       10\% &       40\% &          5 &          3 &          1 \\
\hline\end{tabular}  }
\caption{Mean parameter values - First parameter set}\label{Table1a}
\end{table}

\begin{table} 
\begin{tabular}{|r|rrrrrrr|}
 \hline
Sample size $n$ &     $\widehat{p}_{1,1}$ &     $\widehat{p}_{1,2}$ &     $\widehat{p}_{2,1}$&     $\widehat{p}_{2,2}$ &  $\widehat{\lambda}_{1}$ &  $\widehat{\lambda}_{2}$&    $\widehat\varphi$ \\
\hline

         25 &    16.87\% &    15.43\% &    14.77\% &    19.42\% &     1.2599 &     1.1232 &     0.9243 \\

        50 &    13.28\% &    11.24\% &    10.58\% &    11.98\% &     1.0497 &     0.9038 &     0.7480 \\

       100 &     9.25\% &     7.89\% &     7.99\% &     8.35\% &     0.7336 &     0.6218 &     0.5344 \\

       250 &     6.04\% &     5.24\% &     5.52\% &     5.33\% &     0.5043 &     0.4063 &     0.3701 \\

       500 &     3.94\% &     4.44\% &     3.82\% &     3.72\% &     0.3601 &     0.3106 &     0.2516 \\

      1000 &     2.94\% &     3.22\% &     2.74\% &     2.55\% &     0.2587 &     0.2144 &     0.1813 \\

      5000 &     1.37\% &     1.54\% &     1.17\% &     1.19\% &     0.1148 &     0.1002 &     0.0766 \\

     10000 &     0.92\% &     1.00\% &     0.83\% &     0.84\% &     0.0841 &     0.0660 &     0.0568 \\ \hline
\end{tabular}  
\caption{Standard deviation of parameter values - First parameter set}\label{Table1b}
\end{table}

\begin{table}{
\begin{tabular}{|r|rrrrrrr|} \hline


Sample size $n$ &     $\widehat{p}_{1,1}$ &     $\widehat{p}_{1,2}$ &     $\widehat{p}_{2,1}$&     $\widehat{p}_{2,2}$ &  $\widehat{\lambda}_{1}$ &  $\widehat{\lambda}_{2}$&    $\widehat\varphi$ \\
\hline
        25 &    20.87\% &    11.18\% &     8.15\% &    36.00\% &     4.6948 &     2.7182 &     1.1375 \\

        50 &    23.09\% &     6.10\% &     5.29\% &    38.16\% &     4.8105 &     2.7352 &     1.0503 \\

       100 &    23.22\% &     5.47\% &     3.20\% &    39.66\% &     4.8545 &     2.7997 &     1.0095 \\

       250 &    25.08\% &     2.59\% &     2.09\% &    39.24\% &     4.8667 &     2.9008 &     1.0442 \\

       500 &    24.76\% &     1.98\% &     1.38\% &    39.91\% &     4.9131 &     2.9150 &     1.0333 \\

      1000 &    24.93\% &     1.42\% &     1.00\% &    40.22\% &     4.9382 &     2.9211 &     0.9906 \\

      5000 &    24.84\% &     0.72\% &     0.38\% &    40.00\% &     4.9740 &     2.9743 &     1.0017 \\

     10000 &    25.05\% &     0.44\% &     0.34\% &    39.92\% &     4.9738 &     2.9820 &     1.0018 \\\hline

True value &       25\% &        0\% &        0\% &       40\% &          5 &          3 &          1 \\\hline

\end{tabular}  }

\caption{Mean parameter values - Second parameter set}\label{Table2a}
\end{table}

\begin{table}
{ 
\begin{tabular}{|r|rrrrrrr|}
 \hline

Sample size $n$ &     $\widehat{p}_{1,1}$ &     $\widehat{p}_{1,2}$ &     $\widehat{p}_{2,1}$&     $\widehat{p}_{2,2}$ &  $\widehat{\lambda}_{1}$ &  $\widehat{\lambda}_{2}$&    $\widehat\varphi$ \\
\hline
       25 &    17.48\% &    16.80\% &    11.65\% &    17.46\% &     1.2515 &     0.9555 &     0.8950 \\

        50 &    13.42\% &    11.42\% &     7.75\% &    12.09\% &     0.9161 &     0.7196 &     0.6621 \\

       100 &    10.05\% &     7.83\% &     4.83\% &     8.07\% &     0.7286 &     0.4765 &     0.4751 \\

       250 &     5.71\% &     4.19\% &     3.08\% &     5.37\% &     0.4263 &     0.3293 &     0.3125 \\

       500 &     4.07\% &     2.73\% &     2.04\% &     3.62\% &     0.3101 &     0.2191 &     0.2048 \\

      1000 &     2.82\% &     2.00\% &     1.36\% &     2.48\% &     0.1981 &     0.1605 &     0.1624 \\

      5000 &     1.33\% &     0.97\% &     0.61\% &     1.14\% &     0.0992 &     0.0702 &     0.0696 \\

     10000 &     0.88\% &     0.68\% &     0.53\% &     0.78\% &     0.0635 &     0.0519 &     0.0446 \\ \hline

\end{tabular}  }
\caption{Standard deviation parameter values - Second parameter set}\label{Table2b}
\end{table}

\section{Empirical application to earthquakes}\label{section:application}

\subsection{Data}

To illustrate the potential of the model for various uses, we apply the
proposed BINAR approach on earthquake counts of the Earth's tectonic plates.
Since the proposed model accounts for serial correlation and
cross-autocorrelation, earthquake counts is an interesting application for
the following reasons. First, when a mainshock occurs, it provokes many
aftershocks, thus creating serial correlation. Moreover, the seismic waves
travel over a large distance, and may cross different tectonic plates,
provoking other earthquakes on these other plates (within some time range).
This is why earthquake counts on contiguous plates should show statistical
dependence and cross autocorrelation. Given the purpose of the paper, the
empirical application is by no means an exhaustive seismological analysis of
earthquake risk across the planet. From a \emph{seismological standpoint},
some of the results are indicative and further investigation would be
required in some aspects of the application.

 The data used in the example comes from two sources. First, the limits of
each tectonic plate come from the Department of Geography of the University of Colorado at Boulder, who provide on their website, various shapefiles for
use with ArcGIS\footnote{\texttt{http://www.colorado.edu/geography/foote/maps/assign/hotspots/hotspots.html }}. Figure \ref{Fig:carte} shows the mapping of the tectonic plates in the latter reference.
The tectonic plates are:\ North American, Eurasian, Okhotsk, Pacific (split
in two, East and West), Amur, Indo-Australian, African, Indo-Chinese,
Arabian, Philippine, Coca, Caribbean, Somali, South American, Nasca and
Antarctic. We have decided not to group together the West and East Pacific
plates to keep the integrity of the input. Secondly, the listings of past
earthquakes come from the Advanced National Seismic System (ANSS) Composite
Earthquake Catalog. Each entry provides the date, time, longitude and
latitude, depth and magnitude (and its type) of each earthquake. The
database spans the time period from 1898 to 2011, but as mentioned on the
ANSS website, many databases have been added between 1898 to the mid-1960s.
Other factors may have affected the data as well. The addition of seismic
stations and the technological improvement of seismological instruments may
inflate the number of small earthquakes in the database. To twart this
issue, we focus on earthquakes with a magnitude of at least 5 ($M\geq 5$),
and we used a subset of the data. To find the most appropriate cutoff date
in the data, we used statistical tests of changes in structures (F test (Chow
test), from \cite{Andrews} or \cite{Zeilis} for implementation issues). Based on these tests, events between January 1st, 1965
up to March 30th, 2011 have been kept in the sample. The total number of
earthquakes is approximately 70~000. Finally, for $M\geq 6$ earthquakes, the
time period considered is January 1st, 1992 up to March 30th, 2011, which
amount to approximately 3000 events.

The proposed BINAR model will be largely used to investigate first-order
autocorrelation and cross-autocorrelation in earthquake counts, which is
equivalent to measuring the degree of the first order type of space and time
contagion. To do this, earthquake counts have been computed at several
frequencies. Time ranges of 3, 6, 12, 24, 36 and 48 hours have been
considered to count the number of earthquakes. 

\begin{figure}
\begin{center}
\includegraphics[width=0.8\textwidth]{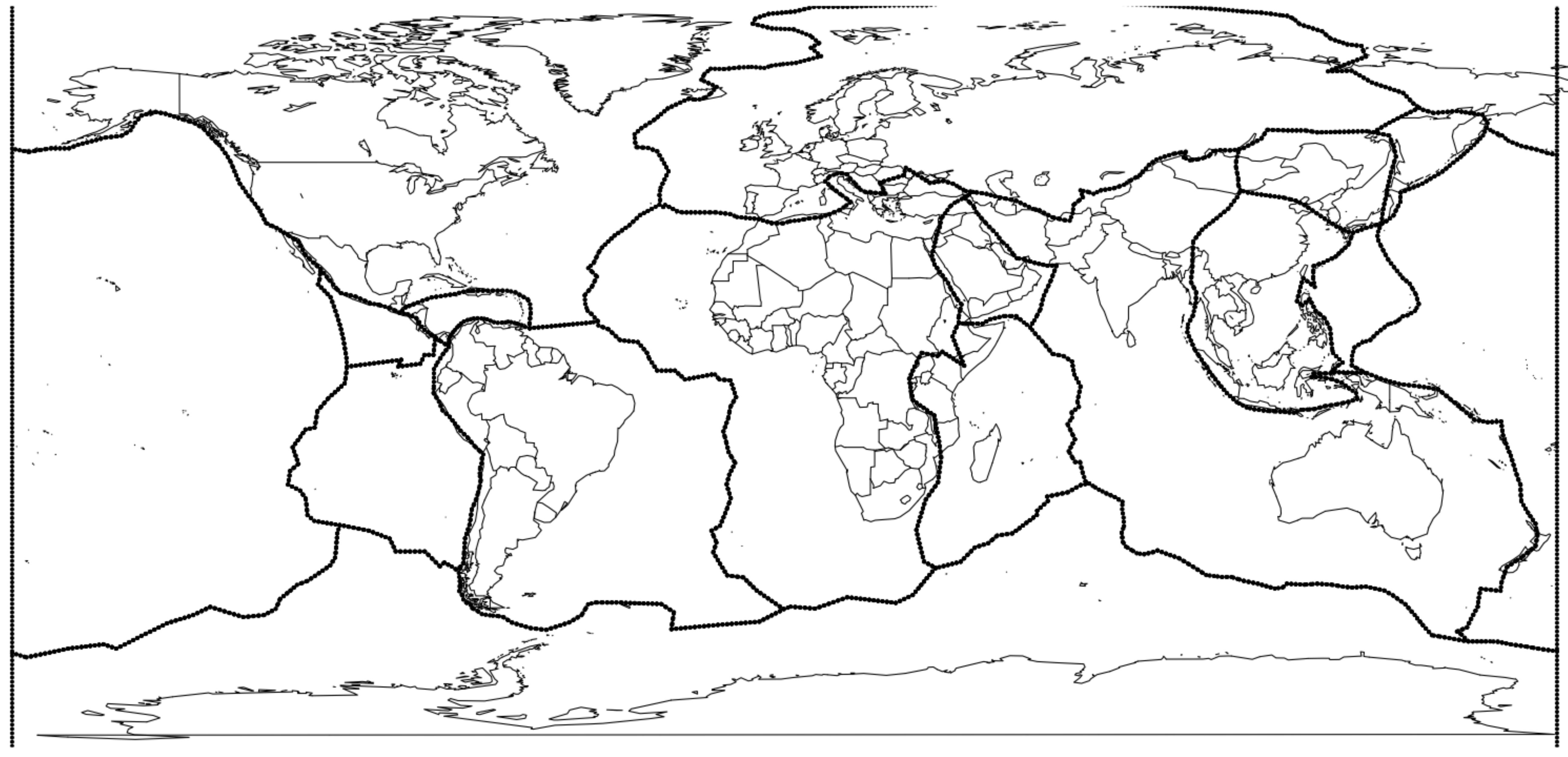}
\end{center}
\caption{The 17 tectonic plates (North American, Eurasian, Okhotsk, Pacific (split
in two, East and West), Amur, Indo-Australian, African, Indo-Chinese,
Arabian, Philippine, Coca, Caribbean, Somali, South American, Nasca and
Antarctic).}\label{Fig:carte}
\end{figure}

\subsection{Quality of fit}

The proposed BINAR model encompasses various models as well. When, $%
p_{i,j}=0,i,j=1,2$ and $\varphi =0$, there is no serial correlation, no
cross autocorrelation, and both series are independent. Those are two
independent Poisson noises. When $\varphi \not=0$, the Poisson noises are
dependent. When $p_{1,2}=p_{2,1}=0$ and $\varphi =0$, there is serial
correlation but both series are independent. Those are equivalent to two
univariate INAR processes. Finally, when $p_{1,2}=p_{2,1}=0$ and $\varphi
\not=0$, we find the diagonal BINAR\ model of \cite{karlis}.
In the latter model, there is no cross autocorrelation. Thus, in this
section we compare the fit of those four models, along with the proposed
BINAR approach, on each of the 136 possible pairs of tectonic plates.

Table \ref{Table3}
shows the results of the likelihood ratio test (LRT) of 2 dependent Poisson
noises and of 2 independent INARs, both compared with 2 independent Poisson
noises. Each column shows descriptive statistics for various sampling
frequencies. The meaning of each row is as follows:\ mean LRT across the 136
combinations of tectonic plates, standard deviation of LRTs, along with
various quantiles (50\%, 75\%, 90\%, 95\%, 97.5\%) and the proportion of
combinations that is statistically significant. Thus, the value of the LRT
provides an idea of how useful the added feature really is. The upper part
of Table \ref{Table3}
shows that dependence in the noises is important for about 10\% of the
combinations of plates (most of them are contiguous). However, serial
correlation is a much more important feature, even though noises are
independent. This should have been expected given earthquake mechanics.
Thus, the LRT\ shows that autocorrelation is one important component when
analyzing earthquake counts at these sampling frequencies.

Sampling frequency also influences the degree of autocorrelation and cross
autocorrelation. When the sampling frequency is $h$ hours, all earthquakes
on two tectonic plates will count towards $\varphi $ in the time interval $%
\left[ 0,h\right] $ hours. All earthquakes occurring in the time interval $%
\left[ h,2h\right] $ will help find first degree autocorrelation and cross
autocorrelation i.e. it should appear in the $\boldsymbol{P}$ matrix. Finally,
all earthquakes occurring after $2h$ hours, would be accounted for if a
second or third degree BINAR\ was considered. Thus, when $h$ increases, $%
\rho $ should become more significant. This is what we observe in the upper
panel of Table \ref{Table3}. Even though the percentage of combinations that are significant
very slightly increase, the mean LRT and higher percentiles of LRT tend to
grow more importantly. Note that only a few combinations are such that
tectonic plates are contiguous.

Table \ref{Table4}
shows similar computations for the diagonal BINAR\ model over the
independent INAR model (contribution of $\varphi $), and for the proposed
BINAR\ model over the diagonal BINAR\ model (contribution of $p_{1,2}$ and $%
p_{2,1}$). One sees that 6-13\% of combinations of tectonic plates show a
significant dependence in the noise (upper part of Table \ref{Table4}). Those are in large part
contiguous plates, which explains the rather low percentage. For other
combinations, plates are too far apart and independent INAR models are often
sufficient. When we compare the proposed BINAR\ model to the diagonal model
(lower part of Table \ref{Table4}), we find that 6-9\% of the combination of plates show
cross autocorrelation. In other words, the non-diagonal terms in the $%
\boldsymbol{P}$ matrix, i.e. $p_{1,2}$ and $p_{2,1}$, are both statistically
different from zero. In most of the cases, the combination of plates that
had a significant fit to both models, are contiguous. We further investigate
some of those in the next subsection. 

\subsection{Analysis of pairs of tectonic plates}\label{section-magn}

In this section, we take a look at specific pairs of tectonic plates to
observe parameters and interpret them. We will look at four different
combinations of plates, which are all closely related to Japan. The Japanese
area is one of the most seismically active regions of the world, being at
the limit of 4 tectonic plates. Table \ref{Table5} shows the (CMLE)\ parameter estimates at four
different sampling frequencies, for the four combinations of plates. The 8th
line of each panel displays the LRT over the diagonal BINAR model. A value
of more than 5.99 is significant at a level of 95\%, meaning that both cross
autocorrelation terms are significant. The last two lines show the
unconditional mean number of earthquakes per period on each plate. 

The first and second order moments estimators for the Okhotsk and West Pacific plates are given in Table \ref{Table5bis}.

Let us focus on the Okhotsk and West Pacific plates, where the results are
shown at the bottom left part of Table \ref{Table5}, and assume the sampling frequency is 24
hours. Thus, the daily number of earthquakes on the Okhotsk plate is
explained by three sources:\ the number of earthquakes on the previous day
on both plates, and a random noise effect. When no earthquake was observed
on both plates on a given day, a Poisson r.v. with mean 0.16 earthquake per
day will generate seismicity on the next day. The probability of observing
one or more earthquakes by noise only is 15\% in this case.

The interest of the proposed BINAR model lies in the representation of the
spatial contagion effect between tectonic plates. Suppose that $n$
earthquakes were observed on the Okhotsk plate on a given day, while $m$
earthquakes were observed on the West Pacific plate on that same day. The
number of earthquakes on the Okhotsk plate the next day will be the result
of the convolution of a binomial($n$, 0.0817) (autocorrelation of order 1),
a binomial($m$, 0.028) (cross autocorrelation of order 1) and a Poisson
noise (mean 0.162). Thus, on average, the number of earthquakes on the next
day on the Okhotsk plate will be%
\[
0.0817n+0.028m+0.162.
\]%
Under a diagonal model estimated with CMLE (but not provided in Table \ref{Table5}), that quantity is
\[
0.0922n+0.1748
\]%
which ignores the contribution of the West Pacific plate's earthquakes. With
$n\geq 0$ and $m\geq 1$, the diagonal model will understate the potential
number of earthquakes on the Okhotsk plate. When $m$ is relatively large,
which is one of the cases we are interested in risk management, that
understatement can be important. For example, if 3 earthquakes are observed
on the West Pacific plate, and only one on the Okhotsk plate, then we have
an average of 0.3277 earthquake on the latter plate with the proposed BINAR
model, compared with 0.267 with the diagonal model. 

Further, suppose a case where $n=0
$ and $m\geq 1$, in a context where we want to compare the mean number of
earthquakes on the Okhotsk plate using the diagonal and proposed models. In
that situation, the mean number of earthquakes in the proposed model is
roughly 16\% larger\footnote{$\frac{0.028m+0.162}{0.1748}-1$ is
approximately $\frac{0.028m}{0.1748}=0.16m.$} than the mean number of
earthquakes in the diagonal model, for each additional earthquake we observe
on the West Pacific plate (i.e. 16\% times $m$). Picking other sets of
plates, even if the LRT is much smaller and still significant, will lead to
similar analyses (Okhotsk and Amur at 12- or 24-hour frequency is one
example among others).

One may be tempted to directly compare values of $p_{1,2}$ and $p_{2,1}$ and
conclude that earthquakes on one plate provokes earthquakes on the other.
However, the gross values of $p_{1,2}$ and $p_{2,1}$ are obviously
influenced by the number of earthquakes on each plate. For example, at the
24-hour frequency, $p_{1,2}=0.028$ and $p_{2,1}=0.106$ so that one may
mistakenly pretend that Okhotsk earthquakes generally provoke earthquakes on
the West Pacific plate, and not the converse. However, there are
approximately 3 times more earthquakes on the West Pacific plate than on the
Okhotsk plate, meaning that $p_{1,2}$ has to be lower to compensate for the
higher counts on the second plate. Thus, if one is interested in determining
if earthquake counts on one plate determine the other, one should perform
Granger causality tests.

\begin{table}
\begin{tabular}{|r|rrrrrr|}

\multicolumn{ 7}{c}{{\bf Likelihood ratio test - Dependent Poisson over independent Poisson}} \\ \hline

    {\bf } &    3 hours &    6 hours &   12 hours &   24 hours &   36 hours &   48 hours \\\hline

       Mean &     6.9755 &     8.9167 &     9.7735 &     9.9896 &    10.6302 &    10.4286 \\

     Stdev &    67.8029 &    76.2997 &    83.3619 &    85.4670 &    91.0044 &    88.9381 \\

      50\% &     0.0000 &     0.0000 &     0.0000 &     0.0000 &     0.0000 &     0.0087 \\

      75\% &     0.0315 &     0.4181 &     0.5951 &     1.0178 &     1.0533 &     1.2933 \\

      90\% &     1.6329 &     3.6041 &     4.3060 &     3.7918 &     4.1652 &     4.7956 \\

      95\% &     5.1601 &     6.6625 &    11.2303 &    11.7174 &    12.8974 &    10.5802 \\

    97.5\% &     9.9646 &    16.9966 &    18.8452 &    22.7211 &    24.8961 &    19.5306 \\

 \% $>$ 3.84 &     7.35\% &    10.29\% &    12.50\% &    10.29\% &    11.03\% &    12.50\% \\ \hline

  \multicolumn{ 7}{c}{{\bf Likelihood ratio test - independent INARs over independent Poisson}} \\ \hline

    {\bf } &    3 hours &    6 hours &   12 hours &   24 hours &   36 hours &   48 hours \\\hline

        Mean &    1215.16 &    1150.81 &    1036.95 &     864.04 &     557.15 &     542.15 \\

     Stdev &    1084.90 &    1082.05 &     964.69 &     798.20 &     483.23 &     456.70 \\

      50\% &     851.01 &     781.32 &     735.33 &     561.29 &     399.63 &     416.24 \\

      75\% &    1630.64 &    1497.45 &    1415.05 &    1173.66 &     819.89 &     745.15 \\

      90\% &    2979.68 &    3030.71 &    2678.53 &    2147.04 &    1423.80 &    1319.45 \\

      95\% &    3227.93 &    3170.56 &    2837.07 &    2308.66 &    1515.82 &    1435.19 \\

    97.5\% &    3551.20 &    3589.28 &    3213.06 &    2580.33 &    1761.71 &    1650.87 \\

 \% $>$ 5.99 &   100.00\% &   100.00\% &   100.00\% &   100.00\% &   100.00\% &   100.00\% \\\hline

\end{tabular}  
\caption{Likelihood ratio test  (1) independent Poisson vector (with $\lambda\neq 0$) over independent Poisson variables (2) two independent INAR processes versus two independent Poisson variables}\label{Table3}
\end{table}

\begin{table}
\begin{tabular}{|r|rrrrrr|} 

     \multicolumn{ 7}{c}{{\bf Likelihood ratio test - diagonal BINAR over independent INARs}} \\ \hline

    {\bf } &    3 hours &    6 hours &   12 hours &   24 hours &   36 hours &   48 hours \\\hline

        Mean &     3.3744 &     4.8765 &     5.4843 &     6.3690 &     9.1950 &     8.0845 \\

     Stdev &    26.9264 &    36.7864 &    42.7106 &    52.6144 &    72.6791 &    63.8500 \\

      50\% &     0.0048 &     0.0391 &     0.0955 &     0.0171 &     0.2013 &     0.0337 \\

      75\% &     0.5303 &     0.5109 &     0.7107 &     0.8735 &     1.1663 &     1.0702 \\

      90\% &     1.8276 &     2.7837 &     4.4992 &     5.0022 &     4.2161 &     4.3279 \\

      95\% &     4.9423 &     4.7359 &     8.3814 &     9.9875 &    10.7675 &     8.4055 \\

    97.5\% &     9.2657 &    15.0495 &    13.3836 &    13.4121 &    24.5204 &    16.5784 \\

 \% $>$ 3.84 &     6.62\% &     8.82\% &    12.50\% &    13.24\% &    10.83\% &    10.83\% \\ \hline

           \multicolumn{ 7}{c}{{\bf Likelihood ratio test - proposed BINAR over diagonal BINAR}} \\ \hline

    {\bf } &    3 hours &    6 hours &   12 hours &   24 hours &   36 hours &   48 hours \\\hline

        Mean &     4.9409 &     4.1814 &     3.7077 &     4.0927 &     2.8335 &     3.5242 \\

     Stdev &    30.4265 &    25.5655 &    23.8860 &    22.6459 &    12.6545 &    15.1581 \\

      50\% &     0.9720 &     0.5379 &     0.3533 &     0.4504 &     0.3631 &     0.5931 \\

      75\% &     2.6904 &     2.3514 &     1.6824 &     2.4026 &     2.0943 &     2.3357 \\

      90\% &     5.2349 &     5.2194 &     4.0567 &     4.7533 &     4.4141 &     5.3640 \\

      95\% &     9.9654 &     7.8503 &     6.4292 &     8.5268 &     6.5271 &     8.7984 \\

    97.5\% &    15.7839 &    15.6885 &    12.0481 &    11.9986 &    11.9423 &    18.2802 \\

 \% $>$ 5.99 &     8.09\% &     8.82\% &     7.35\% &     8.82\% &     6.67\% &     9.02\% \\ \hline

\end{tabular}  
\caption{Likelihood ratio test (1) diagonal BINAR over two independent INAR processes (2) proposed BINAR over diagonal BINAR, with Poisson innovation.}\label{Table4}
\end{table}

\begin{table}
\begin{tabular}{|r|rrrr|rrrr|}
\hline
{\bf Plates} & \multicolumn{ 4}{|c|}{{\bf Okhotsk (\#1) vs. Philippine (\#2)}} & \multicolumn{ 4}{c|}{{\bf Okhotsk (\#1) vs. Amur (\#2)}} \\
\hline
{\bf Params/Frequency} &    3 hours &   12 hours &   24 hours &   48 hours &    3 hours &   12 hours &   24 hours &   48 hours \\
\hline
   $ \widehat{p}_{1,1}$ &     7.44\% &     9.45\% &    10.38\% &    12.81\% &     7.44\% &     9.44\% &    10.30\% &    12.75\% \\

    $ \widehat{p}_{1,2}$&     0.61\% &     0.60\% &     1.15\% &     0.00\% &     0.35\% &     0.83\% &     3.06\% &     2.31\% \\

     $ \widehat{p}_{2,1}$&     0.00\% &     0.00\% &     0.00\% &     0.00\% &     0.16\% &     0.42\% &     0.44\% &     0.40\% \\

    $ \widehat{p}_{2,2}$&     3.87\% &     5.83\% &     8.52\% &     8.80\% &     4.68\% &     6.44\% &     8.67\% &    10.59\% \\

  $ \widehat{\lambda}_{1}$ &     0.0222 &     0.0868 &     0.1711 &     0.3358 &     0.0223 &     0.0871 &     0.1720 &     0.3348 \\

$ \widehat{\lambda}_{2}$ &     0.0156 &     0.0612 &     0.1187 &     0.2368 &     0.0032 &     0.0122 &     0.0237 &     0.0466 \\

    $ \widehat{\varphi}$ &     0.0000 &     0.0001 &     0.0000 &     0.0021 &     0.0000 &     0.0003 &     0.0009 &     0.0024 \\

LRT (over diag.) &     4.1106 &     1.6737 &     3.6011 &     0.0000 &     2.8874 &     9.4113 &     9.5405 &     4.0631 \\

Uncond. mean (\#1) &     0.0241 &     0.0963 &     0.1926 &     0.3852 &     0.0241 &     0.0963 &     0.1927 &     0.3852 \\

Uncond. mean (\#2) &     0.0162 &     0.0650 &     0.1298 &     0.2596 &     0.0034 &     0.0134 &     0.0269 &     0.0538 \\
\hline
{\bf Plates} & \multicolumn{ 4}{|c|}{{\bf Okhotsk (\#1) vs. West Pacific (\#2)}} & \multicolumn{ 4}{c|}{{\bf Okhotsk (\#1) vs. IndoChinese (\#2)}} \\
\hline
{\bf Params/Frequency} &    3 hours &   12 hours &   24 hours &   48 hours &    3 hours &   12 hours &   24 hours &   48 hours \\
\hline
     $ \widehat{p}_{1,1}$ &     6.12\% &     7.18\% &     8.17\% &    10.13\% &     7.45\% &     9.46\% &    10.36\% &    12.83\% \\

       $ \widehat{p}_{1,2}$&     1.85\% &     2.85\% &     2.80\% &     3.13\% &     0.02\% &     0.28\% &     0.24\% &     0.10\% \\

       $ \widehat{p}_{2,1}$ &     5.84\% &     7.56\% &    10.60\% &     9.74\% &     0.22\% &     0.40\% &     0.00\% &     0.75\% \\

       $ \widehat{p}_{2,2}$ &    10.71\% &    13.52\% &    15.52\% &    15.67\% &     6.71\% &    10.29\% &    11.58\% &    13.68\% \\

  $ \widehat{\lambda}_{1}$ &     0.0214 &     0.0818 &     0.1620 &     0.3132 &     0.0223 &     0.0863 &     0.1710 &     0.3344 \\

  $ \widehat{\lambda}_{2}$ &     0.0576 &     0.2212 &     0.4261 &     0.8539 &     0.0767 &     0.2948 &     0.5818 &     1.1326 \\

   $ \widehat{\varphi}$   &     0.0012 &     0.0098 &     0.0269 &     0.0739 &     0.0002 &     0.0015 &     0.0046 &     0.0099 \\

LRT (over diag.) &   352.5998 &   275.2342 &   257.0215 &   157.0995 &     0.2839 &     3.0150 &     1.2208 &     0.6136 \\

Uncond. mean (\#1) &     0.0241 &     0.0963 &     0.1926 &     0.3852 &     0.0241 &     0.0963 &     0.1926 &     0.3852 \\

Uncond. mean (\#2) &     0.0661 &     0.2643 &     0.5285 &     1.0570 &     0.0823 &     0.3290 &     0.6580 &     1.3155 \\
\hline
\end{tabular}   
\caption{Estimation of parameters for counts of earthquakes on several tectonic plates,
Okhotsk vs. Philippine; Okhotsk vs. Amur; Okhotsk vs. West Pacific; and Okhotsk vs. Indo-Chinese plates. Includes a likelihood ratio test (null: $\boldsymbol{P}$ is a diagonal matrix). }\label{Table5}
\end{table}

\begin{table}{
\begin{tabular}{|r|rrrr|}
\hline
{\bf Plates} & \multicolumn{ 4}{c|}{{\bf Okhotsk (\#1) vs. West Pacific (\#2)}}  \\
\hline
{\bf Params/Frequency} &    3 hours &   12 hours &   24 hours &   48 hours \\
\hline
$\mathbb{E}(N_{1,t})$ &0.024 &0.096 &0.192& 0.385 \\
$\mathbb{E}(N_{2,t})$ &0.065 &0.264 &0.528 &1.057 \\
$\text{var}(N_{1,t})$     & 0.022 &0.084 &0.167 &0.326 \\
$\text{var}(N_{2,t})$     &0.060 &0.239 &0.466 &0.934 \\
$\text{cor}(N_{1,t},N_{2,t})$     & 0.038 &0.079 &0.110 &0.150 \\
$\text{cor}(N_{1,t},N_{1,t-1})$     & 0.062& 0.075 &0.086 &0.109 \\
$\text{cor}(N_{2,t},N_{2,t-1})$     & 0.108 &0.138 &0.162 &0.165 \\
$\text{cor}(N_{1,t},N_{2,t-1})$     & 0.033 &0.053 &0.055 &0.068\\
\hline
\end{tabular}   }
\caption{First and second order moments, $\boldsymbol{\mu}$ and $\boldsymbol{\gamma}(0)$ and cross-lagged correlations $\boldsymbol{\rho}(0)$ and $\boldsymbol{\rho}(1)$, for counts on two plates, Okhotsk vs. West Pacific.}\label{Table5bis}
\end{table}

\subsection{Foreshocks and aftershocks}\label{sec:aftershock}

As another application of the model, we illustrate the relationship between
medium-size earthquakes (i.e. $5\leq M\leq 6$) and large earthquakes ($M>6$%
). Using the proposed BINAR model in this context will help understand how
the size of a set of earthquakes at a given time period can help predict the
size of future earthquakes. Most of the time, large earthquakes (mainshocks)
are followed by aftershocks, which are usually smaller (medium-size or
smaller). The inverse, in which case a medium-size earthquake may announce a
larger earthquake, is usually less likely, but still regularly observed.
Figure \ref{Fig:Nature-style-2} illustrates this relationship between foreshocks, mainshocks
and aftershocks.

\begin{figure}[ht]
\begin{center}
\includegraphics[width=0.8\textwidth]{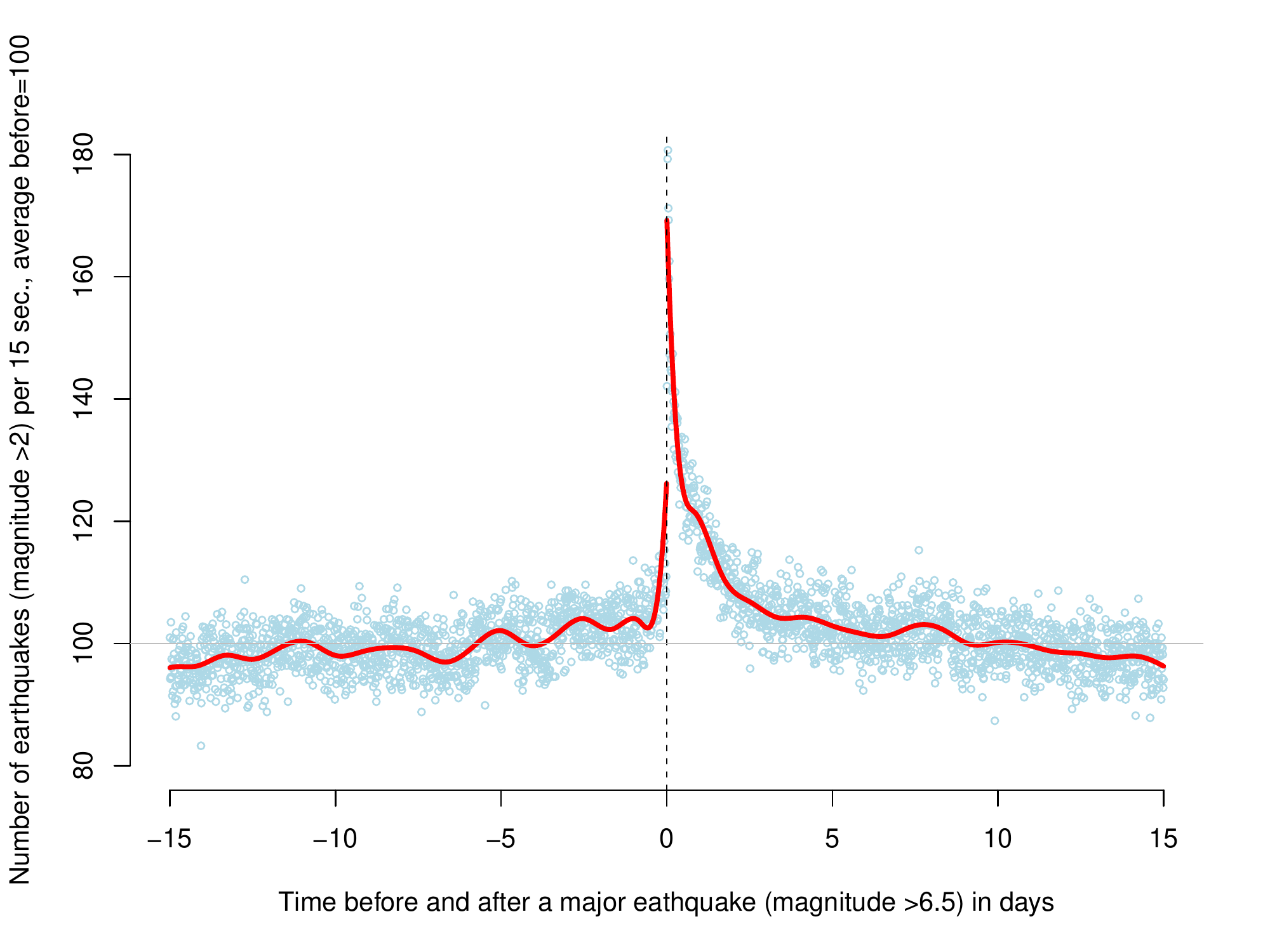}
\end{center}
\caption{Number of earthquakes (magnitude exceeding 2.0) per 15 seconds, following a large earthquake (of magnitude 6.5), normalized so that the expected number of earthquakes before is 100. Plain lines are spline regressions, either before or after the main shock).}\label{Fig:Nature-style-2}
\end{figure}

As a first exercise, we have fitted the same five models, that is the
proposed BINAR model, the diagonal BINAR model, independent INARs, dependent
Poisson noises, and independent Poisson noises. According to the LRT, the
fit of the diagonal BINAR\ model over independent Poisson noises is
statistically significant for all tectonic plates, at all sampling
frequencies. This is also the case when the diagonal BINAR is compared to
independent INAR models. Finally, for all but a few tectonic plates and/or
sampling frequencies, the diagonal BINAR model has a very significant fit
over dependent Poisson noises.

Thus, for this application, we would like to measure if cross
autocorrelation is important, i.e. if earthquake size on a given period
helps explain future earthquake sizes. Table \ref{Table6} shows the LRT for the proposed BINAR
model over the diagonal model, for various sampling frequencies. A value
larger than 5.99 means that $p_{1,2}\not=0$ and $p_{2,1}\not=0$, implying
that large earthquakes are followed by medium-size earthquakes, and the
opposite also holds. This is indeed the case in the large majority of
tectonic plates, although this relationship clearly gets weaker when the
sampling frequency goes from 3-hours to 48-hours (last row of the table).
This should have been expected given Omori's law, which explains the
temporal decay of aftershock rates.

Table \ref{Table7}
shows the CMLE\ parameter estimates for all plates at the 12-hour frequency.
The last two columns provide the unconditional mean number of medium ($5\leq
M\leq 6$) and large ($M>6$) earthquakes. With a 12-hour sampling frequency,
only the Coca and Somali plates have an unsignificant LRT at a level of
95\%, meaning that for 15 plates, cross autocorrelation is important. Thus,
one should not directly compare values of $p_{1,2}$ and $p_{2,1}$ since only
Granger causality tests will provide the true significance of contagion
between the two sets of data.

Let us illustrate the impact of cross autocorrelation for a given tectonic
plate. Assume that on the Okhotsk plate, which seats beneath part of Japan,
there is a large earthquake in the prior 12-hour period (and no medium-size
earthquake). Then, cross autocorrelation will be the most important
component of the mean number of earthquakes in the next period. Indeed, the
expected number of medium earthquakes in the next period is $%
0.2444+0.0780=0.3224$ and cross autocorrelation will account for more than
the two thirds of the total expectation. One can compare the size of $p_{1,2}
$ and $p_{2,1}$ with the noise components ($\lambda $s) and observe that the
ratio is much larger in this section than in Section \ref{section-magn}. Thus,
cross sectional effects are a key element in this context. Finally, the
ratio of the expected number of $M>6$ earthquakes over $5\leq M\leq 6$
earthquakes is on average (across plates)\ approximately 10 which is
consistent with Gutenberg and Richter's law. 


\begin{table}
\begin{tabular}{|r|rrrrrr|}
\hline
{\bf Plate name} & {\bf 3 hours} & {\bf 6 hours} & {\bf 12 hours} & {\bf 24 hours} & {\bf 36 hours} & {\bf 48 hours} \\
\hline
North American &    60.9470 &    31.3554 &    21.6005 &    16.5618 &    15.9287 &     6.0150 \\

  Eurasian &     3.4172 &     2.4860 &    17.5732 &     1.3990 &     8.4201 &     1.0743 \\

   Okhotsk &   135.5948 &   109.1666 &   109.3060 &   113.5049 &    36.3703 &    52.2677 \\

East Pacific &    37.2827 &    50.2991 &    32.2566 &    19.9613 &    19.1339 &     4.4437 \\

West Pacific &   101.3846 &    96.3865 &   110.2205 &   109.0303 &    62.4744 &    81.3029 \\

      Amur &    12.9162 &    17.2652 &     7.7396 &     4.0498 &    10.2767 &    15.0012 \\

Indo-Australian &   303.2257 &   233.9429 &   169.1037 &   124.9187 &    75.0355 &    48.7183 \\

   African &    35.0197 &    11.1661 &    15.9194 &    12.7146 &    28.1233 &     9.0930 \\

Indo-Chinese &    63.2515 &    29.9391 &    49.5970 &    64.0781 &    29.8289 &    45.1555 \\

   Arabian &     4.5921 &     4.5763 &    12.3768 &     3.1358 &     2.1765 &     0.1744 \\

Philippine &     9.2969 &    21.3144 &    20.1805 &    15.5858 &    18.9310 &    17.5329 \\

      Coca &    12.6070 &    15.1147 &     3.1709 &     8.2198 &     5.3469 &     9.0246 \\

 Caribbean &    20.4764 &    24.5509 &    21.2256 &     3.4367 &     7.6112 &     2.4771 \\

    Somali &     0.2432 &     5.1726 &     3.2162 &     0.0039 &     0.1625 &     0.0392 \\

South American &    81.8145 &    58.0135 &    50.4781 &    55.8060 &    77.3867 &    58.4621 \\

     Nasca &    76.8393 &    38.6514 &    20.2549 &    17.6382 &     8.6659 &    11.1903 \\

 Antarctic &     2.9275 &     9.2410 &     9.1584 &     4.7911 &     1.8339 &     0.9290 \\
\hline
Average LRT &    56.5786 &    44.6260 &    39.6105 &    33.8139 &    23.9827 &    21.3471 \\
\hline
\end{tabular}   
\caption{Likelihood Ratio Test for the proposed BINAR model over the diagonal model, for various sampling frequencies, when $N_{1,t}$ denotes the number of medium size earthquakes (magnitude between $5$ and $6$) during period $t$, and $N_{1,t}$ denotes the number of large earthquakes (magnitude exceeding $6$) during period $t$.}\label{Table6}
\end{table}

\begin{table}{\small
\begin{tabular}{|r|rrrrrrrrrr|}
\hline
{\bf Plate name} & {\bf $\widehat{p}_{1,1}$} & {\bf $\widehat{p}_{1,2}$} & {\bf $\widehat{p}_{2,1}$} & {\bf $\widehat{p}_{2,2}$} & {\bf $\widehat{\lambda}_{1}$} & {\bf $\widehat{\lambda}_{2}$} & {\bf $\widehat{\varphi}$} & {\bf $\widehat{p}_{1,2}/\widehat{p}_{2,1}$} & {\bf Mean$<$6} & {\bf Mean$>$6} \\
\hline
North American &    0.05633 &    0.11372 &    0.00444 &    0.01027 &     0.0844 &     0.0091 &     0.0028 &      25.63 &     0.0906 &     0.0096 \\

  Eurasian &    0.01348 &    0.14431 &    0.01082 &    0.00006 &     0.0181 &     0.0011 &     0.0002 &      13.34 &     0.0185 &     0.0013 \\

   Okhotsk &    0.11224 &    0.24445 &    0.00995 &    0.01951 &     0.0780 &     0.0104 &     0.0033 &      24.57 &     0.0910 &     0.0115 \\

East Pacific &    0.07950 &    0.12959 &    0.00385 &    0.00025 &     0.2631 &     0.0285 &     0.0075 &      33.64 &     0.2900 &     0.0296 \\

West Pacific &    0.15688 &    0.21797 &    0.00642 &    0.01163 &     0.1995 &     0.0212 &     0.0084 &      33.93 &     0.2426 &     0.0231 \\

      Amur &    0.00931 &    0.09470 &    0.00676 &    0.02041 &     0.0107 &     0.0024 &     0.0008 &      14.02 &     0.0110 &     0.0025 \\

Indo-Australian &    0.19749 &    0.24562 &    0.01079 &    0.03095 &     0.4039 &     0.0490 &     0.0225 &      22.76 &     0.5205 &     0.0564 \\

   African &    0.03906 &    0.13683 &    0.00204 &    0.00716 &     0.0564 &     0.0054 &     0.0014 &      67.20 &     0.0595 &     0.0056 \\

Indo-Chinese &    0.09744 &    0.16198 &    0.00563 &    0.00956 &     0.2501 &     0.0236 &     0.0080 &      28.76 &     0.2816 &     0.0254 \\

   Arabian &    0.04026 &    0.24457 &    0.00347 &    0.00009 &     0.0167 &     0.0007 &     0.0001 &      70.51 &     0.0176 &     0.0008 \\

Philippine &    0.03630 &    0.09681 &    0.00864 &    0.03512 &     0.0536 &     0.0055 &     0.0012 &      11.20 &     0.0563 &     0.0062 \\

      Coca &    0.06228 &    0.04115 &    0.00155 &    0.00534 &     0.0439 &     0.0069 &     0.0020 &      26.61 &     0.0471 &     0.0070 \\

 Caribbean &    0.03080 &    0.26310 &    0.00001 &    0.00009 &     0.0083 &     0.0008 &     0.0004 &   
31969 &     0.0088 &     0.0008 \\

    Somali &    0.02325 &    0.02809 &    0.00384 &    0.00000 &     0.0284 &     0.0012 &     0.0001 &       7.32 &     0.0291 &     0.0013 \\

South American &    0.13661 &    0.12043 &    0.01141 &    0.01507 &     0.1384 &     0.0160 &     0.0046 &      10.55 &     0.1628 &     0.0181 \\

     Nasca &    0.11426 &    0.13361 &    0.00307 &    0.01442 &     0.0378 &     0.0034 &     0.0013 &      43.49 &     0.0433 &     0.0036 \\

 Antarctic &    0.02875 &    0.03897 &    0.00879 &    0.00153 &     0.0548 &     0.0056 &     0.0010 &       4.43 &     0.0567 &     0.0061 \\
\hline
\end{tabular}   }
\caption{CMLE estimators for the proposed BINAR model, for 12-hour frequency, when $N_{1,t}$ denotes the number of medium size earthquakes (magnitude between $5$ and $6$) during period $t$, and $N_{1,t}$ denotes the number of large earthquakes (magnitude exceeding $6$) during period $t$.}\label{Table7}
\end{table}

\subsection{Risk management}\label{section:rm}

In many risk management applications, such as the computation of premiums
and reserves or the pricing of catastrophe derivatives, the total loss
amount over a given area, region, or city is what matters most. One
important driver of the total loss amount, is the total number of
earthquakes over the area in question for various time horizons $\left[ 0,T%
\right] .$ In this section, we compare the distribution of the sum of the
number of earthquakes over a given area, for the diagonal and the proposed
BINAR models. We do so for pairs of tectonic plates (see Section \ref{section-magn}) where the LRT was
statistically significant, otherwise the two models are too similar.

Two sets of tectonic plates are analyzed: (1) Okhotsk and West Pacific
plates (Japan is at the limits of the West Pacific, Okhotsk, Philippine and
Amur plates) and (2) South American and Nasca plates (which holds the South
American continent and the West Coast of South America (Chile for example)\
is located at the limit of these two plates). Two extreme scenarios are
generated. In the first set of plate, we assume that 23 earthquakes were
observed on the Okhotsk tectonic plate and 46 were observed on the West
Pacific plate (this is indeed what happened in the last 12 hours of March
10th, 2011). In the second set of plates, we assume that 24 earthquakes were
observed on the South American plate, whereas only 3 were observed on the
Nasca plate (this is what occurred on the second half of February 27th,
2010). Using 100~000 paths of a bivariate diagonal INAR and the proposed
bivariate INAR\ models, we have computed the total number of earthquakes
that occurred on both plates (of a given set), on the next $T$ days ($%
T=1,3,7,14$ and 30). The results are shown in Table \ref{Table8}. The left (right) panel
focuses on the first (second) set of tectonic plates. The numbers shown are $%
\mathbb{P} \left( \left. \sum_{k=1}^{T}\left( N_{1,k}+N_{2,k}\right) \geq
n\right\vert \mathcal{F}_{0}\right) $ for various values of $n$.

One sees that the diagonal model really understates the number of
earthquakes in the following days, especially in the tails. For example, in
the first set of plates (Okhotsk and West Pacific), the probability of
having a total of at least 20 earthquakes in the next day is 6.7\% with the
proposed model, whereas it is 0.7\% with the diagonal model; it is a
ten-fold increase. This increase is all due to the non-diagonal terms in the
$\boldsymbol{P}$ matrix as it accounts for the cross auto-correlation between
earthquake counts. A less dramatic increase is observed in the second set of
plates (South America and Nasca). For example, the probability of having a
total of at least 7 earthquakes over a week on both plates is 39.6\% in the
diagonal model whereas this probability is 44\% in the proposed model. As
expected, over the long-term, both processes converge to their equilibrium
and the effect of the initial conditions seem to dissipate.

We now suppose that with both sets of plates, no earthquake occurred on a
given day. Table \ref{Table9} shows the results of $\Proba \left( \left. \sum_{k=1}^{T}\left(
N_{1,k}+N_{2,k}\right) \geq n\right\vert \mathcal{F}_{0}\right) $ for $T=14$
and 30 days. For smaller $T$ values, the probabilities generated by the two
models are very similar since it takes a lot of time to develop earthquakes
and thus to observe cross-sectional effects. For the given $T$ values, the
probabilities are very similar for both models, with a slightly fatter tail
for the proposed model in the first set of tectonic plates. In the second
set of plates, the probabilities are too close to be able to conclude of any
difference.

In summary, we have also run different scenarios on other sets of plates and
it confirms that the effect of the non-diagonal terms in the $\boldsymbol{P}$
matrix is to generate fatter tails in the sum of the number of earthquakes.
This is very useful for short-term risk management applications such as the
pricing of earthquake bonds and other derivatives. An underestimation of the
number of earthquakes could mean arbitrage opportunities if the market model
has a similar behavior to the diagonal model.

\begin{table}{\small
\begin{tabular}{|crrrrr|crrrrr|}
\hline
\multicolumn{ 6}{|c}{{\bf Diagonal model (Okhotsk and West Pacific)}} & \multicolumn{ 6}{|c|}{{\bf Diagonal model (South American and Nasca)}} \\
\hline
  n / days &      1 day &     3 days &     7 days &    14 days &    30 days &   n / days &      1 day &     3 days &     7 days &    14 days &    30 days \\
\hline
         5 &     0.9680 &     0.9869 &     0.9978 &     0.9999 &     1.0000 &          2 &     0.8489 &     0.9166 &     0.9757 &     0.9981 &     1.0000 \\

        10 &     0.5650 &     0.7207 &     0.8972 &     0.9884 &     0.9999 &          5 &     0.2708 &     0.4321 &     0.6965 &     0.9277 &     0.9988 \\

        15 &     0.1027 &     0.2270 &     0.4978 &     0.8548 &     0.9985 &          7 &     0.0685 &     0.1628 &     0.3959 &     0.7655 &     0.9906 \\

        20 &     0.0067 &     0.0277 &     0.1308 &     0.4997 &     0.9752 &         10 &     0.0035 &     0.0192 &     0.1041 &     0.4108 &     0.9334 \\

        25 &     0.0003 &     0.0018 &     0.0170 &     0.1684 &     0.8588 &         15 &     0.0000 &     0.0002 &     0.0033 &     0.0547 &     0.5885 \\

        30 &     0.0000 &     0.0001 &     0.0014 &     0.0319 &     0.5965 &         20 &     0.0000 &     0.0000 &     0.0000 &     0.0031 &     0.1873 \\

        40 &     0.0000 &     0.0000 &     0.0000 &     0.0002 &     0.1034 &         25 &     0.0000 &     0.0000 &     0.0000 &     0.0001 &     0.0290 \\

        50 &     0.0000 &     0.0000 &     0.0000 &     0.0000 &     0.0041 &         30 &     0.0000 &     0.0000 &     0.0000 &     0.0000 &     0.0030 \\
\hline
\multicolumn{ 6}{|c}{{\bf Proposed model  (Okhotsk and West Pacific)}} & \multicolumn{ 6}{|c|}{{\bf Proposed model  (South American and Nasca)}} \\
\hline
  n / days &      1 day &     3 days &     7 days &    14 days &    30 days &   n / days &      1 day &     3 days &     7 days &    14 days &    30 days \\
\hline
         5 &     0.9946 &     0.9977 &     0.9997 &     1.0000 &     1.0000 &          2 &     0.8780 &     0.9321 &     0.9805 &     0.9979 &     1.0000 \\

        10 &     0.8344 &     0.9064 &     0.9712 &     0.9970 &     1.0000 &          5 &     0.3323 &     0.4888 &     0.7331 &     0.9362 &     0.9990 \\

        15 &     0.3638 &     0.5288 &     0.7548 &     0.9479 &     0.9995 &          7 &     0.0990 &     0.2034 &     0.4410 &     0.7913 &     0.9921 \\

        20 &     0.0671 &     0.1573 &     0.3616 &     0.7256 &     0.9917 &         10 &     0.0082 &     0.0309 &     0.1271 &     0.4435 &     0.9386 \\

        25 &     0.0053 &     0.0246 &     0.0970 &     0.3815 &     0.9357 &         15 &     0.0000 &     0.0004 &     0.0056 &     0.0688 &     0.6145 \\

        30 &     0.0002 &     0.0023 &     0.0151 &     0.1268 &     0.7646 &         20 &     0.0000 &     0.0000 &     0.0001 &     0.0039 &     0.2099 \\

        40 &     0.0000 &     0.0000 &     0.0001 &     0.0038 &     0.2335 &         25 &     0.0000 &     0.0000 &     0.0000 &     0.0001 &     0.0380 \\

        50 &     0.0000 &     0.0000 &     0.0000 &     0.0001 &     0.0221 &         30 &     0.0000 &     0.0000 &     0.0000 &     0.0000 &     0.0036 \\
\hline
\end{tabular}  }
\caption{Empirical evolution of $%
\mathbb{P} \left( \left. \sum_{k=1}^{T}\left( N_{1,k}+N_{2,k}\right) \geq
n\right\vert \mathcal{F}_{0}\right) $ for various values of $n$ (per line) and $T$ (per column), on two plates (Okhotsk vs. West Pacific and South American vs. Nasca), either for a diagonal $\boldsymbol{P}$ matrix - on top - or for a full matrix - below.}\label{Table8}
\end{table}

\begin{table}
\begin{tabular}{|rrr|rrr|}
\hline
                                 \multicolumn{ 6}{|c|}{{\bf Diagonal model}} \\
\hline
\multicolumn{ 3}{|c}{{\bf (Okhotsk and West Pacific plates)}} & \multicolumn{ 3}{|c|}{{\bf (South American and Nasca plates)}} \\
\hline
  n / days &    14 days &    30 days &   n / days &    14 days &    30 days \\
\hline
         5 &     0.9495 &     1.0000 &          3 &     0.9096 &     0.9991 \\

        10 &     0.5281 &     0.9943 &          5 &     0.6710 &     0.9904 \\

        15 &     0.1138 &     0.9125 &          7 &     0.3733 &     0.9506 \\

        20 &     0.0121 &     0.6302 &         10 &     0.0962 &     0.7770 \\

        25 &     0.0008 &     0.2791 &         12 &     0.0296 &     0.5883 \\

        30 &     0.0000 &     0.0774 &         15 &     0.0038 &     0.2996 \\

        35 &     0.0000 &     0.0137 &         20 &     0.0001 &     0.0498 \\

        40 &     0.0000 &     0.0018 &         25 &     0.0000 &     0.0036 \\

        45 &     0.0000 &     0.0002 &         30 &     0.0000 &     0.0002 \\

        50 &     0.0000 &     0.0000 &         35 &     0.0000 &     0.0000 \\
\hline
                                 \multicolumn{ 6}{|c|}{{\bf Proposed model}} \\
\hline
\multicolumn{ 3}{|c}{{\bf (Okhotsk and West Pacific plates)}} & \multicolumn{ 3}{|c|}{{\bf (South American and Nasca plates)}} \\
\hline
  n / days &    14 days &    30 days &   n / days &    14 days &    30 days \\
\hline
         5 &     0.9444 &     0.9999 &          3 &     0.9061 &     0.9990 \\

        10 &     0.5211 &     0.9927 &          5 &     0.6662 &     0.9899 \\

        15 &     0.1261 &     0.9033 &          7 &     0.3754 &     0.9497 \\

        20 &     0.0139 &     0.6242 &         10 &     0.0990 &     0.7745 \\

        25 &     0.0007 &     0.2877 &         12 &     0.0317 &     0.5856 \\

        30 &     0.0000 &     0.0870 &         15 &     0.0049 &     0.2986 \\

        35 &     0.0000 &     0.0177 &         20 &     0.0001 &     0.0513 \\

        40 &     0.0000 &     0.0024 &         25 &     0.0000 &     0.0044 \\

        45 &     0.0000 &     0.0003 &         30 &     0.0000 &     0.0002 \\

        50 &     0.0000 &     0.0001 &         35 &     0.0000 &     0.0000 \\
\hline
\end{tabular}  
\caption{Empirical evolution of $%
\mathbb{P} \left( \left. \sum_{k=1}^{T}\left( N_{1,k}+N_{2,k}\right) \geq
n\right\vert \mathcal{F}_{0}\right) $ for various values of $n$ and two time horizon $T$, on two plates (Okhotsk vs. West Pacific and South American vs. Nasca), either for a diagonal $\boldsymbol{P}$ matrix, or for a full matrix.}\label{Table9}
\end{table}

\section{Conclusion}

In this paper, we confirm the conclusion of \cite{Parsons} claiming that very large earthquakes do not necessarily cause large ones at a very long distance. There might be contagion, but it will be within two close areas (e.g. contiguous tectonic plates), and over a short period of time (a few hours, perhaps a few days, but not much longer). Nevertheless, not taking into account possible spatial contagion between consecutive periods may lead to large underestimation of overall counts. In the context of foreshocks, mainshocks and aftershocks, we have also observed that major earthquakes might generate several medium-size earthquakes on the same tectonic plate, and also foreshocks might announce possible large earthquakes.

\bibliographystyle{ECA_jasa}
\bibliography{BINAR}

\end{document}